\spnewtheorem{observation}{Observation}{\bfseries}{\itshape}
\Crefname{observation}{Observation}{Observations}
\Crefname{section}{Sec.}{Sec.}
\Crefname{definition}{Def.}{Def.}
\Crefname{algorithm}{Alg.}{Alg.}
\Crefname{lemma}{Lem.}{Lem.}
\Crefname{theorem}{Thm.}{Thm.}
\Crefname{corollary}{Cor.}{Cor.}
\Crefname{figure}{Fig.}{Fig.}
\newif\ifproofs
\newif\iflong
\newcommand{\remspace}{\vspace{-0.2cm}}
\newcommand{\card}[1]{{\left\vert{#1}\right\vert}} 
\newcommand{\powerset}[1]{\mathcal{P}(#1)}
\newcommand{\eqdef}{\buildrel \mbox{\tiny\rm def} \over =}
\newcommand{\true}{{\textit{true}}}
\newcommand{\false}{{\textit{false}}}
\newcommand{\letter}{{a}}
\newcommand{\struct}{s}
\newcommand{\Dom}{{\mathcal{D}}}
\newcommand{\Int}{{\mathcal{I}}}
\newcommand{\func}{f}
\newcommand{\srt}{s}
\newcommand{\stgraph}[1]{\textit{QA}(#1)}
\newcommand{\init}{I}
\newcommand{\Inv}{\textit{Inv}}
\newcommand{\TR}{\textit{TR}}
\newcommand{\safe}{P}
\newcommand{\bsort}{\sort{s}}
\newcommand{\sort}[1]{\mathsf{#1}}
\newcommand{\sbapaint}{\sort{int}}
\newcommand{\sbapaset}{\sort{set}}
\newcommand{\relation}[1]{{\textit{#1}}}
\newcommand{\snode}{\sort{node}}
\newcommand{\squorumof}[1]{\sort{set}_{\mathsf{#1}}}
\newcommand{\rmember}{\relation{member}}
\newcommand{\remove}[1]{\xspace}
\newcommand{\allnotes}[1]{}
\newcommand{\para}[1]{\vspace{3pt}{\em #1.}}
\newcommand{\Param}{\textit{Prm}}
\newcommand{\ParamInt}{\Param_{I}}
\newcommand{\ParamSet}{\Param_{S}}
\newcommand{\paramof}[1]{{\bf #1}}
\newcommand{\nP}{\paramof{n}}
\newcommand{\fP}{\paramof{f}}
\newcommand{\tP}{\paramof{t}}
\newcommand{\ToBAPA}{\mathcal{B}}
\newcommand{\structbapa}{\struct_B}
\newcommand{\Bterm}{b}
\newcommand{\assmp}{\Gamma}
\newcommand{\cSet}{{\bf a}}
\newcommand{\cInt}{i}
\newcommand{\Bgrammar}{B}
\newcommand{\Kgrammar}{K}
\newcommand{\Lgrammar}{L}
\newcommand{\Fgrammar}{F}
\newcommand{\Val}{\iota}
\newcommand{\ToFOL}{\mathcal{FO}}
\newcommand{\bapaSubvalid}{\sqsubseteq_\assmp}
\newcommand{\foassmp}{\Theta}
\newcommand{\IPFassmp}{\Delta}
\newcommand{\nodevar}{n}
\newcommand{\Intbapa}{\Int_B}
\newcommand{\Dombapa}{\Dom_B}
\newcommand{\thresholds}{T}
\newcommand{\thresholdsEx}{\hat{\thresholds}}
\newcommand{\guard}[1]{g_{\geq #1}}
\newcommand{\Voccore}{\Sigma_C}
\newcommand{\structcore}{\struct_C}
\newcommand{\Domcore}{\Dom_C}
\newcommand{\Intcore}{\Int_C}
\newcommand{\lang}{TIP\xspace}
\newcommand{\ITA}{\textsc{Aip}\xspace}
\newcommand{\ouralg}{\ITA}
\newcommand{\ouralgeager}{\textsc{Aip\textsubscript{Eager}}\xspace}
\newcommand{\ouralglazy}{\textsc{Aip\textsubscript{Lazy}}\xspace}
\begin{document}
\pagestyle{plain}

\setlength{\abovedisplayskip}{3pt}
\setlength{\belowdisplayskip}{3pt}
\setlength{\abovedisplayshortskip}{0pt}
\setlength{\belowdisplayshortskip}{0pt}
\title{Verification of Threshold-Based Distributed Algorithms by Decomposition to Decidable Logics
}

\author{
Idan Berkovits\inst{1}
\and
Marijana Lazi\' c\inst{2}\and
Giuliano Losa\inst{3}
\and
Oded Padon\inst{4}
\and
Sharon Shoham\inst{1}
 }
\institute{
Tel Aviv University, Israel
\and
TU Wien, Austria, TU Munich, Germany
\and
University of California, Los Angeles, USA
\and
Stanford University, USA
}
\maketitle              

\begin{abstract}
Verification of fault-tolerant distributed protocols is an
  immensely difficult task. Often, in these protocols,
\emph{thresholds} on set cardinalities are used both in the process code and in its correctness proof, e.g.,
  a process can perform an action only if it has received an
  acknowledgment from at least half of its peers.
Verification of threshold-based protocols is extremely challenging as it involves two kinds of reasoning:
first-order reasoning about the unbounded state of the protocol,
together with reasoning about sets and cardinalities.
In this work, we develop a new methodology for decomposing the verification task of such protocols into \emph{two} decidable logics: EPR and BAPA.
Our key insight is that such protocols use thresholds in a restricted way as a means
to obtain certain  properties of ``intersection'' between sets.
We define a language for expressing such properties, and present two translations: to EPR and BAPA.
The EPR translation allows verifying the protocol while assuming these properties, and the BAPA translation allows verifying the correctness of the properties.
We further develop an algorithm for automatically generating the properties needed for verifying a given protocol,
facilitating fully automated deductive verification.
Using this technique we have verified several challenging protocols,
  including Byzantine one-step consensus, hybrid reliable broadcast and
  fast Byzantine Paxos.
\end{abstract}

\section{Introduction}

Fault-tolerant distributed protocols play an important role in the avionic
and automotive industries, medical devices, cloud systems,
blockchains,~etc.  Their unexpected behavior might put human lives at
risk or cause a huge financial loss.  Therefore, their correctness is
of ultimate importance.

Ensuring correctness of distributed protocols is a notoriously difficult task, due to the unbounded number of processes and messages, as well as the non-deterministic behavior caused by the presence of faults, concurrency, and message delays.
In general, the problem of verifying such protocols is undecidable.
This imposes two directions for attacking the problem:
  (i) developing fully-automatic verification techniques for \emph{restricted} classes of protocols, or
  (ii) designing deductive techniques for a wide range of systems that \emph{require user assistance}.
Within the latter approach, recently emerging techniques~\cite{McMillanP18} leverage decidable logics that are supported by mature automated solvers to significantly reduce user effort, and increase verification productivity.
Such logics bring several key benefits: \begin{inparaenum}[(i)] \item their solvers usually enjoy stable performance, and
\item  whenever annotations provided by the user are incorrect, the automated solvers can provide a counterexample for the user to examine. \end{inparaenum}

Deductive verification based on decidable logic requires a logical formalism that satisfies two conflicting criteria: the formalism should be expressive enough to capture the protocol, its correctness properties, its inductive invariants, and ultimately its verification conditions. At the same time, the formalism should be decidable and have an effective automated tool for checking verification conditions.

In this paper we develop a methodology for deductive verification of \emph{threshold-based} distributed protocols using decidable logic, where we use \emph{decomposition} into two well-established decidable logics to settle the tension explained above.

In threshold-based protocols, a process may take different actions based on the number of processes from which it received certain messages.
This is often used to achieve fault-tolerance. For example, a process may take a certain step once it has received an acknowledgment from a strict majority of its peers, that is, from more than $\nP/2$ processes, where $\nP$ is the total number of processes.
Such expressions as $\nP/2$, are called \emph{thresholds}, and in general they can depend on additional parameters, such as the maximal number of crashed processes, or the maximal number of Byzantine processes.

Verification of such protocols requires two flavors of reasoning, as demonstrated by the following example.
Consider the Paxos~\cite{lamport_part-time_1998} protocol, in which each process proposes a value and all must agree on a common proposal.
The protocol tolerates up to~$\tP$ process crashes, and ensures that every two processes that decide agree on the decided value.
The protocol requires $\nP>2\tP$ processes, and each process must obtain confirmation messages from $\nP-\tP$ processes before making a decision. The protocol is correct due to, among others,
the fact that if $\nP>2\tP$ then any two sets of $\nP-\tP$ processes have a process in common.
To verify this protocol we need to express (i)~relationships between an unbounded number of processes and values, which typically requires quantification over uninterpreted domains (``every two processes''), and (ii)~properties of sets of certain cardinalities (``any two sets of $\nP-\tP$ processes intersect'').
Crucially, these two types of reasoning are intertwined, as the sets of processes for which we need to capture cardinalities may be defined by their relations with other state components (``messages from at least $\nP-\tP$ processes'').
While uninterpreted first-order logic (FOL) seems like the natural fit for the first type of reasoning, it is seemingly a poor fit for the second type, since it cannot express set cardinalities and the arithmetic used to define thresholds.
Typically, logics that combine both types of reasoning are either undecidable or not flexible enough to capture protocols as intricate as the ones we consider.

The approach we present relies on the observation
that threshold-based protocols and their correctness proofs use set cardinality thresholds in a restricted way as a means to obtain certain properties between sets, and that these properties can be expressed in FOL via a suitable encoding.
In the example above, the important property is that every two sets of cardinality at least $\nP-\tP$ have a non-empty intersection.
This property can be encoded in FOL by modeling sets of cardinality at least $\nP-\tP$ using an uninterpreted sort along with
a \iflong corresponding \fi membership relation between this sort and the sort for processes.
However, the validity of the \iflong intersection \fi property under the assumption that $\nP > 2\tP$ cannot be verified in FOL.

The key idea of this paper is, hence, to decompose the verification problem of threshold-based protocols into the following problems:
\begin{inparaenum}[(i)]
    \item Checking protocol correctness  assuming certain intersection properties, which can be reduced to verification conditions expressed in the Effectively Propositional (EPR) fragment of FOL~\cite{lewisComplexityResultsClasses1980,piskac_deciding_2010}.
    \item Checking that sets with cardinalities adhering to the thresholds satisfy the intersection properties (under the protocol assumptions), which can be reduced to validity checks in quantifier-free Boolean Algebra with Presburger Arithmetic (BAPA)~\cite{KuncakNR05}.
\end{inparaenum}
Both BAPA and EPR are decidable logics, and are supported by mature \iflong automated \fi solvers.

A crucial step in employing this decomposition is finding suitable intersection properties that are strong enough to imply the protocol's correctness (i.e., imply the FOL verification conditions), and are also implied by the precise definitions of the thresholds and the protocol's assumptions.
Thus, these intersection properties can be viewed as \emph{interpolants} between the FOL verification conditions and the thresholds in the context of the protocol's assumptions. We present fully automated procedures to find such intersection property interpolants, either eagerly or lazily.

The main contributions of this paper are\iflong\else\footnote{An extended version of this paper, which includes additional details and proofs, appears in~\cite{extended}.}\fi:
\begin{enumerate}[noitemsep,topsep=0pt]
  \item We define a threshold intersection property (\lang) language for expressing properties of sets whose cardinalities adhere to certain thresholds; \lang is expressive enough to capture the properties required to prove the correctness of challenging threshold-based protocols.

\item We develop two encodings of \lang, one in BAPA, and another in EPR.
These encodings facilitate decomposition of protocol verification into decidable EPR and (quantifier-free) BAPA queries.
 \item We show that there are only finitely many \lang formulas (up to equivalence) that are valid for any given protocol.
Moreover, we present an effective algorithm for computing all \lang formulas valid for a given protocol, as well as an algorithm for lazily finding a set of \lang formulas that suffice to prove a given protocol.
\item Put together, we obtain an effective deductive verification approach for threshold-based protocols:
  the user models the protocol and its inductive invariants in EPR using a suitable encoding of thresholds, and defines the thresholds and the protocol's assumptions using arithmetic; verification is carried out automatically via decomposition to well-established decidable logics.
  \item We implement the approach, leveraging mature existing solvers (Z3 and CVC4), and evaluate it by verifying several challenging threshold-based protocols with sophisticated thresholds and assumptions. Our evaluation shows the effectiveness and flexibility of our approach in modeling and verifying complex protocols, including the feasibility of automatically inferring threshold intersection properties.

\end{enumerate}

 \section{Preliminaries}
\label{sec:background}

\iflong
In this section we present the necessary background on the formalization of transition systems using FOL, as well as on the decidable logics used in our work:~EPR and~BAPA.
\fi

\paragraph{\bf Transition Systems in FOL}

We model distributed protocols as transition systems expressed in many-sorted FOL. A state of the system is a first-order (FO) structure $\struct = (\Dom, \Int)$
over a vocabulary $\Sigma$ that consists of sorted constant, function and relation symbols, s.t. $\struct$ satisfies a finite set of \emph{axioms} $\foassmp$ in the form of closed formulas over $\Sigma$. $\Dom$ is the \emph{domain} of $\struct$
mapping each sort to a set of objects (elements),
and $\Int$ is the \emph{interpretation function}.
\iflong
A FO \emph{transition system} is a tuple $(\Sigma, \foassmp, \init, \TR)$, where $\Sigma$ and $\foassmp$ are as above, $\init$ is the \emph{initial condition} given by a closed formula over $\Sigma$, and $\TR$ is the
\emph{transition relation} given by a closed formula over $\Sigma
\uplus \Sigma'$ where $\Sigma$ describes the source state of
the transition and $\Sigma' = \{a' \mid a \in \Sigma\}$
describes the target state. We require that $\TR$ does not modify any symbol that appears in $\foassmp$. The set of initial states and the set of
transitions of the system consist of the states, respectively, pairs
of states, that satisfy $\init$, respectively, $\TR$.
\else
A FO \emph{transition system} is a tuple $(\Sigma, \foassmp, \init, \TR)$, where $\Sigma$ and $\foassmp$ are as above, $\init$ is a closed formula over $\Sigma$ that defines the \emph{initial states}, and $\TR$ is a closed formula over $\Sigma
\uplus \Sigma'$ that defines the
\emph{transition relation} where $\Sigma$ describes the source state of
a transition and $\Sigma' = \{a' \mid a \in \Sigma\}$
describes the target state. We require that $\TR$ does not modify any symbol that appears in $\foassmp$.
\fi
The set of reachable states is defined as usual.
In practice, we define FO transition systems using a modeling language with a convenient syntax~\cite{McMillanP18}.

\paragraph{Properties and inductive invariants.}
A \emph{safety property} is expressed by a closed FO formula $\safe$
over $\Sigma$. The system is \emph{safe} if all of its reachable
states satisfy $\safe$.
A closed FO formula $\Inv$ over $\Sigma$ is an
\emph{inductive invariant} for a transition system $(\Sigma, \foassmp, \init, \TR)$ and property $\safe$ if
the following formulas, called the \emph{verification conditions},  are valid (equivalently, their negations are unsatisfiable):
(i) $\foassmp \to (\init \to \Inv)$, (ii) $\foassmp \to (\Inv \wedge \TR \to \Inv')$ and (iii)~$\foassmp \to (\Inv \to \safe)$, where
$\Inv'$ results from substituting every symbol in $\Inv$ by its primed
version. \iflong Requirements (i) and (ii) ensure that $\Inv$
represents a superset of the reachable states, hence together with (iii) safety follows. \fi
We also use inductive invariants to verify arbitrary first-order LTL formulas via the reduction of~\cite{PadonHLPSS18,PadonHMPSS18}.

\paragraph{\bf Effectively Propositional Logic (EPR)}

The effectively-propositional (EPR) fragment of FOL
is restricted to
formulas without function symbols and with a quantifier prefix $\exists^* \forall^*$ in prenex
normal form.  Satisfiability of EPR formulas is
decidable~\cite{lewisComplexityResultsClasses1980}.  Moreover, EPR formulas
enjoy the \emph{finite model property}, \iflong meaning that a satisfiable
formula is guaranteed to have a finite model.
\else
i.e., $\varphi$ is satisfiable iff it has a finite model.
\fi
\iflong
The size of this model
is bounded by the total number of existential quantifiers and
constants in the formula.
\fi
\iflong
While EPR does not allow any function symbols nor quantifier alternation except $\exists^* \forall^*$,
we consider
\else
We consider
\fi
a straightforward extension of EPR that maintains these properties and is supported by \iflong mature \fi solvers such as Z3~\cite{de_moura_z3:_2008}. The extension allows function symbols and quantifier alternations
as long as the formula's \emph{quantifier alternation
graph}, denoted $\stgraph{\varphi}$, is acyclic.
\label{sec:qa-graph}
For $\varphi$ in negation normal form, $\stgraph{\varphi}$
is a directed graph where the set of vertices is the set of sorts and the set of edges
is defined as follows:
\iflong
\begin{itemize}[noitemsep,topsep=0pt]
\item {\bf Function edges:} let $\func$ be a function in $\varphi$ from sorts $\srt_1,\ldots,\srt_k$ to sort $\srt$. Then there is a $\forall\exists$ edge from $\srt_i$ to $\srt$ for every $1 \leq i \leq k$.
\item {\bf Quantifier edges:} let $\exists x: \srt$ be an existential quantifier that resides in the scope of the universal quantifiers $\forall x_1: \srt_1, \ldots, \forall x_k: \srt_k$ in $\varphi$.
Then there is a $\forall\exists$ edge from $\srt_i$ to $\srt$ for every $1 \leq i \leq k$.
\end{itemize}
\else every function symbol introduces edges from its arguments' sorts to its image's sort, and every existential quantifier $\exists x$ that resides in the scope of universal quantifiers introduces edges from the sorts of the universally quantified variables to the sort of $x$.
\fi
The quantifier alternation graph is extended to sets of formulas as expected.

\paragraph{\bf Boolean Algebra with Presburger Arithmetic (BAPA)} \label{sec:bapa}

Boolean Algebra with Presburger Arithmetic (BAPA)~\cite{KuncakNR05} is a FO theory defined over two sorts:
\iflong $\sbapaint$, representing integers, and $\sbapaset$, representing subsets of a finite universe.
\else $\sbapaint$ (for integers), and $\sbapaset$ (for subsets of a finite universe).
\fi
The language is defined \iflong by the following grammar\else as follows\fi:
\iflong
\begin{align*}
\Fgrammar ::= & \Bgrammar_1 = \Bgrammar_2 \mid \Lgrammar_1 = \Lgrammar_2 \mid \Lgrammar_1 < \Lgrammar_2  \mid  \Fgrammar_1 \wedge  \Fgrammar_2 \mid \Fgrammar_1 \vee \Fgrammar_2 \mid \neg \Fgrammar \mid \exists x.\Fgrammar \mid \forall x.\Fgrammar \mid \exists u.\Fgrammar \mid \forall u.\Fgrammar \\
\Bgrammar ::= & x \mid \emptyset \mid \cSet \mid \Bgrammar_1 \cup \Bgrammar_2 \mid  \Bgrammar_1 \cap \Bgrammar_2 \mid \Bgrammar^c \\ \Lgrammar ::= & u \mid  \Kgrammar \mid \nP \mid \cInt \mid \Lgrammar_1 + \Lgrammar_2 \mid \Kgrammar \cdot \Lgrammar \mid |\Bgrammar| \\
\Kgrammar ::= & \ldots -2 \mid -1 \mid  0 \mid 1 \mid 2 \ldots
\end{align*}
\else
\begin{align*}
\Fgrammar ::= & \Bgrammar_1 = \Bgrammar_2 \mid \Lgrammar_1 = \Lgrammar_2 \mid \Lgrammar_1 < \Lgrammar_2  \mid  \Fgrammar_1 \wedge  \Fgrammar_2 \mid \Fgrammar_1 \vee \Fgrammar_2 \mid \neg \Fgrammar \mid \exists x.\Fgrammar \mid \forall x.\Fgrammar \mid \exists u.\Fgrammar \mid \forall u.\Fgrammar \\
\Bgrammar ::= & x \mid \emptyset \mid \cSet \mid \Bgrammar_1 \cup \Bgrammar_2 \mid  \Bgrammar_1 \cap \Bgrammar_2 \mid \Bgrammar^c  \qquad \Lgrammar ::=  u \mid  \Kgrammar \mid \nP \mid \cInt \mid \Lgrammar_1 + \Lgrammar_2 \mid \Kgrammar \cdot \Lgrammar \mid |\Bgrammar| \end{align*}
\fi
where $\Lgrammar$ defines linear integer terms, where $u$ denotes an integer variable,
$k \in \Kgrammar$ defines an (interpreted) integer constant symbol $\ldots, -2,-1,0,1,2 \ldots $,
$\nP$ is an integer constant symbol that represents the size of the finite set universe, $\cInt$ is an uninterpreted integer constant symbol (as opposed to the constant symbols from $\Kgrammar$), and $|\Bterm|$ denotes set cardinality; $\Bgrammar$ defines set terms, where $x$ denotes a set variable, $\emptyset$ is a (interpreted) set constant symbol that represents the empty set, and $\cSet$ is an uninterpreted set constant symbol;
and $\Fgrammar$ defines the set of BAPA formulas, where $\ell_1 = \ell_2$ and $\ell_1 < \ell_2$ are atomic arithmetic formulas and $\Bterm_1 = \Bterm_2$ is an atomic set formula.
(Other set constraints such as $\Bterm_1 \subseteq \Bterm_2$ can be encoded in the usual way).
\iflong Note that BAPA formulas are closed under Boolean operations and allow quantification over integer variables ($u$), and over set variables ($x$). \fi
In the sequel, we also allow arithmetic terms of the form $\frac{\ell}{k}$ where $k \in K$ is a positive integer and $\ell \in \Lgrammar$, as any formula that contains such terms can be translated to an equivalent BAPA formula by multiplying by $k$.

\iflong
A BAPA structure $\structbapa = (\Dom, \Int)$ consists of a domain $\Dom$ mapping sort $\sbapaint$ to the set of all integers and mapping the sort $\sbapaset$ to the set of all subsets of a finite universe $U$, called the \emph{universal set}, and an interpretation function $\Int$ of the symbols for integer and set operations. The semantics of terms and formulas is as expected.
The interpretation of the complement operation is defined with respect to $U$. In particular, this means that $\Int(\emptyset^c) = U$.
The integer constant $\nP$ is interpreted to the size of $U$, i.e. $\Int(\nP) = \card{U}$.
The uninterpreted constant symbols may be interpreted arbitrarily.
\else
A BAPA structure is $\structbapa = (\Dom, \Int)$ where the domain $\Dom$ maps sort $\sbapaint$ to the set of all integers and maps sort $\sbapaset$ to the set of all subsets of a finite universe $U$, called the \emph{universal set}. The semantics of terms and formulas is as expected, where
the interpretation of the complement operation is defined with respect to $U$ (e.g., $\Int(\emptyset^c) = U$), and
the integer constant $\nP$ is interpreted to the size of $U$, i.e. $\Int(\nP) = \card{U}$.
\fi

Both validity and satisfiability of BAPA formulas (with arbitrary
quantification) are decidable~\cite{KuncakNR05}, and the quantifier-free fragment is supported by CVC4~\cite{BansalNewDecisionProcedure2016}\iflong, a mature SMT solver\fi.

 \section{First-Order Modeling of Threshold-Based Protocols}
\label{sec:decomp}

Next we explain our modeling of threshold-based
protocols as transition systems in~FOL (Note that FOL cannot
directly express set cardinality
constraints).
The idea is to capture each threshold by a designated
sort, such that elements of this sort represent sets of
nodes that satisfy the threshold.  Elements of the threshold
sort are then used instead of the actual threshold in the description
of the protocol and in the verification conditions.
For verification
to succeed, some properties of the sets satisfying the cardinality threshold must be
captured in FOL. This is done by introducing additional assumptions
(formally, axioms of the transition system) expressed in FOL, as discussed in \Cref{sec:language}.

\paragraph{Running Example.}\label{sec:exampleBosco}

\begin{figure}[t]
\centering
\lstset{ basicstyle=\scriptsize,columns=flexible,
    keepspaces=true,
    numbers=left,
xleftmargin=2em,
    numberstyle=\tiny,
    emph={
if, then, else, wait, until, broadcast, call
},
    emphstyle={\bfseries},
    mathescape=true,
    frame=none
}
\begin{tabular}{cc}
\begin{minipage}{0.45\textwidth}
\begin{tabular}{cc}
\begin{lstlisting}
Input: $v_p$
broadcast $v_p$ to all processes
wait until $n-t$ messages have been received

if there exists $v$ s.t. more than $\frac{n+3t}{2}$  $\label{line:bosco-pseudo-enough}$
    messages contain value $v$ then
  DECIDE($v$)
if there exists exactly one $v$ s.t. more than
      $\frac{n-t}{2}$ messages contain value $v$ then $\label{line:bosco-pseudo-unique}$
  $v_p$ := $v$
call underlying-consensus($v_p$)
\end{lstlisting} \hspace{0.2cm}
\end{tabular}
\end{minipage}
&
\begin{minipage}{0.5\textwidth}
\begin{lstlisting}
sort $\snode$, $\svalue$, $\squorumof{\nP - \tP}$, $\squorumof{\frac{\nP + 3 \tP + 1}{2}}$, $\squorumof{\frac{\nP - \tP + 1}{2}}$
$\cdots$
assume $\exists q : \squorumof{\nP - \tP}. \; \forall m : \snode.\; \rmember(m,q) \to$
                $\exists u : \svalue. \; \rrcvmsg(n,m,u)$
if $\exists v:\svalue,\, q : \squorumof{\frac{\nP + 3 \tP + 1}{2}}. \; \forall m : \snode.\; $ $\label{line:bosco-rml-enough}$
          $\rmember(m,q) \to \rrcvmsg(n,m,v)$ then
    $\rdecision(n,v)$ := true
if $\exists! v:\svalue. \; \exists q : \squorumof{\frac{\nP - \tP + 1}{2}}. \; \forall m : \snode.\; $
          $\rmember(m,q) \to \rrcvmsg(n,m,v)$ then $\label{line:bosco-rml-unique}$
    $v_p$ := $v$
$\rundcons(n,v_p)$ := true
\end{lstlisting}
\end{minipage}
\end{tabular}
\caption{\label{fig:bosco-pseudo} \label{fig:bosco-rml}
Bosco: a one-step asynchronous Byzantine consensus algorithm~\cite{SongR08}, and
an excerpt RML (relational modeling language) code of the main transition. Note that we overload the $\rmember$ relation for all threshold sorts.
The formula $\exists! x.\,\varphi(x)$ is a shorthand for
\iflong
$(\exists x.\, \varphi(x)) \land (\forall x,y.\,\varphi(x) \land \varphi(y) \to x=y)$.
\else
exists and unique.
\fi
}
\end{figure}

We illustrate our approach
using the example of Bosco---an
asynchronous Byzantine fault-tolerant (BFT) consensus algorithm~\cite{SongR08}.
Its modeling in first-order logic using our technique appears alongside an
informal pseudo-code in~\Cref{fig:bosco-rml}.

In the BFT consensus problem, each node proposes a value and correct nodes must
decide on a unique proposal. BFT consensus algorithms typically require at
least two communication rounds to reach a decision.  In Bosco, nodes execute a
preliminary communication step which, under favorable conditions, reaches an
early decision, and then call an underlying BFT consensus algorithm to ensure
reaching a decision
even if these conditions are not met.
Bosco is safe when $\nP > 3\tP$; it guarantees that a preliminary decision will be reached if all nodes are non-faulty and propose the same value when~$\nP > 5\tP$ (weakly one-step condition), and even if some nodes are faulty, as long as all non-faulty nodes propose the same value, when~$\nP > 7\tP$ (strongly one-step condition).

Bosco achieves consensus by ensuring that (a)~no two correct nodes decide
differently in the preliminary step, and (b)~if a correct node decides
value $v$ in the preliminary step then every correct process calls the
underlying BFT consensus algorithm with proposal $v$.  Property (a) is ensured
by the fact that a node decides in the preliminary step only if more than
$\frac{\nP+3\tP}{2}$ nodes proposed the same value. When $\nP>3\tP$, two sets of cardinality
greater than $\frac{\nP+3\tP}{2}$ have at least one non-faulty node in common, and
therefore no two different values can be proposed by more than $\frac{\nP+3\tP}{2}$ nodes.  Similarly, we can derive property (b) from the fact that a set of more
than $\frac{\nP+3\tP}{2}$ nodes and a set of $\nP-\tP$ nodes intersect in $\frac{\nP+\tP}{2}$ nodes,
which, after removing $t$ nodes which may be faulty, still leaves us with
more than $\frac{\nP-\tP}{2}$ nodes, satisfying the condition in line 9.

\subsection{Threshold-based protocols}

\paragraph{Parameters and resilience conditions.}
We consider protocols whose definitions depend on a set of \emph{parameters}, $\Param$, divided into \emph{integer parameters}, $\ParamInt$, and \emph{set parameters}, $\ParamSet$.
$\ParamInt$ always includes $\nP$, \iflong used to represent \fi the total number of nodes (assumed to be finite).
Protocol correctness is ensured
under a set of assumptions $\assmp$ called \emph{resilience conditions}, formulated as BAPA formulas over $\Param$ (this means that
all the uninterpreted constants appearing in $\assmp$ are from $\Param$).
In Bosco, $\ParamInt = \{\nP, \tP\}$, where $\tP$ is the maximal number of Byzantine failures tolerated by the algorithm,
and $\ParamSet = \{ \fP \}$, where $\fP$ is the set of Byzantine nodes;
$\assmp = \{ \nP \geq 3 \tP + 1, \card{\fP} \leq \tP \}$.

\paragraph{Threshold conditions.}
Both the description of the protocol and the inductive invariant may include conditions that require the size of some set of nodes to be ``at least $t$'', ``at most $t$'', and so on, where the threshold $t$ is of the form $t=\frac{\ell}{k}$, where $k$ is a positive integer, and $\ell$ is a ground BAPA integer term over $\Param$
\iflong (we restrict $k$ to ensure that the guard can be translated to BAPA). Comparing sizes of two sets is not allowed -- we observe that it is not needed for threshold-based protocols.
\else
(we do not allow comparing sizes of two sets -- we observe that it is not needed for threshold-based protocols).
\fi
We denote the set of thresholds by $\thresholds$.
For example, in Bosco, $\thresholds = \{ \nP - \tP, \frac{\nP + 3 \tP + 1}{2}, \frac{\nP - \tP + 1}{2} \}$.

\iflong
Without loss of generality, we assume that all conditions on set cardinalities are of the form ``at least $t$''.
This is because every condition can be written in this form, possibly by introducing new threshold expressions and complementing the set which the condition refers to, according to the following rules:
\else
Wlog we assume that all conditions on set cardinalities are of the form ``at least $t$'' since every condition can be written this way, possibly by introducing new thresholds:
\fi
\[
\card{X} > \frac{\ell}{k} \equiv \card{X} \geq \frac{\ell+1}{k} \qquad \card{X} \leq \frac{\ell}{k} \equiv \card{X^c} \geq \frac{k \cdot \nP- \ell}{k} \qquad \card{X} < \frac{\ell}{k}\equiv \card{X} \leq \frac{\ell-1}{k}
\]

\subsection{Modeling in FOL} \label{subsec:decomp:fol-modeling}

\paragraph{FO vocabulary for modeling threshold-based protocols.}

We describe the protocol's states (e.g., pending messages, votes, etc.) using a core FO vocabulary $\Voccore$ that includes sort $\snode$ and additional sorts and symbols.
Parameters $\Param$ are \emph{not} part of the FO vocabulary used to model the protocol.
Also, we do not model set cardinality directly.
Instead, we encode the cardinality thresholds in FOL by defining a FO vocabulary $\Sigma_{\thresholds}^{\Param}$: \begin{compactitem}
\item For every threshold  $t$ we introduce a \emph{threshold sort} $\squorumof{t}$ with the intended meaning that elements of this sort are sets of nodes whose size is at least $t$.
\item \iflong As the threshold sorts represent \emph{sets}, each sort $\squorumof{t}$ is equipped with a binary relation symbol $\rmember_{t}$  between sort $\snode$ and sort $\squorumof{t}$ that captures the membership relation of a node (first argument) in a set (second argument).
\else
    Each sort $\squorumof{t}$ is equipped with a binary relation symbol $\rmember_{t}$  between sorts $\snode$ and $\squorumof{t}$ that captures the membership relation of a node in a set.
\fi
\item For each set parameter $\cSet \in \ParamSet$ we introduce a unary relation symbol $\rmember_{\cSet}$ over sort $\snode$ that captures membership of a node in the set $\cSet$.
\end{compactitem}
We then model the protocol as a transition system $(\Sigma, \foassmp, \init, \TR)$ where $\Sigma=\Voccore\uplus \Sigma_{\thresholds}^{\Param}$.

We are interested only in states (FO structures over $\Sigma$) where the interpretation of the threshold sorts and membership relations is according to their intended meaning in a corresponding BAPA structure.
\iflong
The intended meaning is captured by a BAPA structure (over $\Param$) that satisfies the resilience condition, and whose universal set coincides with the set of nodes. \fi
Formally, these are $\thresholds$-extensions, defined as follows:

\begin{definition}
\label{def:extension} \label{def:faithful}
We say that a FO structure $\structcore = (\Domcore, \Intcore)$ over $\Voccore$ and a BAPA structure $\structbapa = (\Dombapa, \Intbapa)$ over $\Param$ are \emph{compatible} if $\Dombapa(\sbapaset) = \powerset{\Domcore(\snode)}$, where $\mathcal{P}$ is the powerset operator.
For such compatible structures and a set of thresholds $\thresholds$ over $\Param$,
the \emph{$\thresholds$-extension} of $\structcore$ by $\structbapa$ is the structure $\struct = (\Dom, \Int)$ over  $\Sigma$
defined as follows:
\indent\begin{tabular}{ll}
$\Dom(\bsort) =  \Domcore(\bsort)  \mbox { for every sort $\bsort$ in } \Voccore\qquad \qquad$
& $\Int(\letter) =  \Intcore(\letter) \mbox { for every $\letter$ in } \Voccore$ \\
$\Dom(\squorumof{t}) = \{A \subseteq \Domcore(\snode) \mid \card{A} \geq \Intbapa(t) \}$
& $\Int(\rmember_{\cSet}) = \Intbapa(\cSet)$ \\
\multicolumn{2}{l}{$\Int(\rmember_{t}) = \{ (e,A) \mid e \in \Domcore(\snode), A \in \Dom(\squorumof{t}), e \in A  \}$ }\\
\end{tabular}
\end{definition}

\noindent Note that for the $\thresholds$-extension $\struct$ to be well defined as a FO structure, we must have that $\Dom(\squorumof{t}) \neq \emptyset$ for every threshold $t \in T$.
This means that a $\thresholds$-extension by $\structbapa$ only exists if $\{A \subseteq \Dom(\snode) \mid \card{A} \geq \Intbapa(t) \} \neq \emptyset$\iflong, i.e., there exists at least one set that satisfies each threshold in $\thresholds$\fi.
This is ensured by the following condition:

\begin{definition}[Feasibility]
$\thresholds$ is \emph{$\assmp$-feasible} if $\assmp \models t \leq \nP$ for every $t \in T$.
\end{definition}

\paragraph{Expressing threshold constraints.}

Cardinality constraints can be expressed in FOL over the vocabulary $\Sigma=\Voccore\uplus \Sigma_{\thresholds}^{\Param}$ using quantification. To express \iflong the condition \fi that $\card{\{ \nodevar : \snode  \mid \varphi(\nodevar,\bar{u}) \}} \geq t$, i.e., that there are at least $t$ nodes that satisfy the FO formula $\varphi$ over $\Voccore$ (where $\bar{u}$ are \iflong additional \fi free variables in $\varphi$),
we use the following first-order formula over $\Sigma$:
$
\exists q : \squorumof{t}. \; \forall \nodevar : \snode. \; \rmember_t(\nodevar,q) \to \varphi(\nodevar,\bar{u})
$\iflong,  where $\squorumof{t}$ is the threshold sort assigned to $t$\fi.
Similarly, to express the property that a node is a member of a set parameter $\cSet$ (e.g., to check if $\nodevar \in \fP$, i.e., a node is faulty) we use the FO formula $\rmember_{\cSet}(\nodevar)$.
For example, in \Cref{fig:bosco-rml},  \cref{line:bosco-rml-enough} (right) uses the FO modeling to express the condition in \cref{line:bosco-pseudo-enough} (left).
This modeling is sound in the following sense:

\begin{lemma}[Soundness]
\label{lem:soundness-extension}
    Let $\structcore = (\Domcore, \Intcore)$ be a FO structure over $\Voccore$, $\structbapa = (\Dombapa, \Intbapa)$ a compatible BAPA structure over $\Param$
    s.t. $\structbapa \models \assmp$ and $\thresholds$ a $\assmp$-feasible set of thresholds over $\Param$. Then
    there exists a (unique) $\thresholds$-extension $\struct$ of $\structcore$ by $\structbapa$. Further:
\begin{enumerate}[topsep=0pt]
        \item For every $\cSet \in \ParamSet$ and FO valuation $\Val$: $\struct, \Val \models \rmember_{\cSet}(\nodevar)$ iff $\Val(\nodevar) \in \Intbapa(\cSet)$,
        \item For every $t \in \thresholds$, formula $\varphi$, and FO valuation $\Val$:  $\struct, \Val \models \exists q : \squorumof{t}. \; \forall \nodevar : \snode. \; \rmember_t(\nodevar,q) \to \varphi(\nodevar,\bar{u})$ iff          $\card{\{ e \in \Dom(\snode) \mid \structcore,\Val[\nodevar \mapsto e] \models \varphi(\nodevar,\bar{u}) \}} \geq \Intbapa(t)$.
    \end{enumerate}
\end{lemma}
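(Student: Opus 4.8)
The plan is to treat the three claims in the order stated, since each reduces to unfolding \Cref{def:extension} together with the feasibility assumption. First I would establish existence and uniqueness of the $\thresholds$-extension. \emph{Uniqueness} is immediate: \Cref{def:extension} prescribes $\Dom$ and $\Int$ on every sort and symbol of $\Sigma = \Voccore \uplus \Sigma_{\thresholds}^{\Param}$ explicitly in terms of $\structcore$ and $\structbapa$, so there is nothing to choose. For \emph{existence} the only nontrivial obligation is that this prescription yields a legal many-sorted FO structure, i.e.\ that every sort receives a non-empty domain. The sorts inherited from $\Voccore$ are non-empty because $\structcore$ is a structure, and the set-parameter relations are interpreted directly by the subsets $\Intbapa(\cSet) \subseteq \Domcore(\snode)$. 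The delicate case is a threshold sort $\squorumof{t}$, whose domain is $\{A \subseteq \Domcore(\snode) \mid \card{A} \geq \Intbapa(t)\}$. Here I would invoke feasibility: since $\thresholds$ is $\assmp$-feasible we have $\assmp \models t \leq \nP$, and since $\structbapa \models \assmp$ this gives $\Intbapa(t) \leq \Intbapa(\nP)$; by compatibility the universal set of $\structbapa$ is $\Domcore(\snode)$, so $\Intbapa(\nP) = \card{\Domcore(\snode)}$, and hence the full node set $\Domcore(\snode)$ itself satisfies $\card{\Domcore(\snode)} \geq \Intbapa(t)$ and witnesses $\Dom(\squorumof{t}) \neq \emptyset$. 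This is exactly the role feasibility plays, as already flagged in the note following \Cref{def:extension}.

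Item~1 is then immediate by construction: $\struct, \Val \models \rmember_{\cSet}(\nodevar)$ holds iff $\Val(\nodevar) \in \Int(\rmember_{\cSet})$, and by definition $\Int(\rmember_{\cSet}) = \Intbapa(\cSet)$, which yields the claim.

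For item~2 I would first record the auxiliary fact that $\struct$ and $\structcore$ agree on the interpretation of every $\Voccore$ symbol, so for the $\Voccore$-formula $\varphi$ and any valuation into $\Voccore$ sorts, $\struct \models \varphi$ iff $\structcore \models \varphi$; in particular the set $S_\varphi = \{ e \in \Domcore(\snode) \mid \structcore, \Val[\nodevar \mapsto e] \models \varphi(\nodevar, \bar{u})\}$ is computed identically in both structures. Next I would unfold the quantified formula: a witness for $\exists q : \squorumof{t}$ is an element $A \in \Dom(\squorumof{t})$, i.e.\ a subset $A \subseteq \Domcore(\snode)$ with $\card{A} \geq \Intbapa(t)$, and since $\Int(\rmember_{t}) = \{(e,A) \mid e \in A\}$ the body $\forall \nodevar.\, \rmember_t(\nodevar,q) \to \varphi(\nodevar,\bar{u})$ holds under $q \mapsto A$ exactly when $A \subseteq S_\varphi$. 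Thus the whole formula holds iff there exists $A \subseteq \Domcore(\snode)$ with $\card{A} \geq \Intbapa(t)$ and $A \subseteq S_\varphi$. The forward direction then follows from monotonicity of cardinality, since $A \subseteq S_\varphi$ gives $\card{S_\varphi} \geq \card{A} \geq \Intbapa(t)$; the backward direction follows by choosing the witness $A := S_\varphi$, which lies in $\Dom(\squorumof{t})$ precisely because $\card{S_\varphi} \geq \Intbapa(t)$ and trivially satisfies $A \subseteq S_\varphi$. Finiteness of $\Domcore(\snode)$ ensures all cardinalities are natural numbers, so these comparisons are meaningful.

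The proof is essentially a careful unfolding of definitions, so I do not expect a serious obstacle. The one step carrying real content—and the reason the hypotheses are stated as they are—is the non-emptiness of the threshold sorts in the existence argument, which is exactly what $\assmp$-feasibility (together with $\structbapa \models \assmp$ and compatibility) guarantees. A secondary point to handle cleanly is the witness choice $A = S_\varphi$ in the backward direction of item~2, which is what makes the ``at least $t$'' threshold correspond to the $\exists\forall$ encoding.
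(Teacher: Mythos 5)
Your proof is correct and follows exactly the route the paper intends: the paper states this lemma without an explicit proof, treating it as a direct unfolding of \Cref{def:extension}, with the only substantive point being that $\assmp$-feasibility (via $\Intbapa(t)\leq\Intbapa(\nP)=\card{\Domcore(\snode)}$ under compatibility) guarantees non-emptiness of the threshold sorts, which is precisely the remark following \Cref{def:extension}. Your handling of item~2 — reducing the $\exists\forall$ encoding to the existence of a subset $A\subseteq S_\varphi$ with $\card{A}\geq\Intbapa(t)$ and taking $A=S_\varphi$ as the witness — is the intended argument.
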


\begin{definition}
A first-order structure $\struct$ over $\Sigma$ is \emph{threshold-faithful} if it is a $\thresholds$-extension of some $\structcore$ by some $\structbapa \models \assmp$ (as in \Cref{lem:soundness-extension}).
\end{definition}

\paragraph{Incompleteness.}
\Cref{lem:soundness-extension} ensures that the FO
modeling can be soundly used to verify the protocol.
It also ensures that the modeling is precise on
threshold-faithful  structures (\Cref{def:extension}).  Yet, the
FO transition system is not restricted to such
states, 
\iflong
making it an \emph{abstraction} of the actual protocol.
\else
hence it \emph{abstracts} the actual protocol.
\fi
 To have any hope to verify the protocol, we must capture
\emph{some} of the intended meaning of the threshold sorts and
relations. This is obtained by adding FO axioms to
the FO transition system.  Soundness is maintained as long as
the axioms 
hold in all threshold-faithful structures.
We note that the set of \emph{all} \iflong FO formulas over $\Sigma$ that hold in all threshold-faithful
structures \else such axioms \fi is not recursively
enumerable\iflong\footnote{In a threshold-faithful structure the set of nodes
  must be finite, so validity of a general formula in all
  threshold-faithful structures can easily encode validity of FOL
  over finite structures, which has no complete proof system.}, which is where the essential incompleteness of our approach lies.\else-- this is where the essential incompleteness of our approach lies.\fi

 \section{Decomposition via Threshold Intersection Properties} \label{sec:language}

In this section,
we identify a set of properties we call \emph{threshold intersection properties}. When captured via FO axioms,
these properties
suffice for verifying
many threshold-based protocols (all the ones we considered).
Importantly, these are properties of sets adhering to the thresholds that do not involve the protocol state.
As a result, they can be expressed both in FOL and in BAPA.
This allows us
to decompose the
verification task into:
\begin{inparaenum}[(i)]
\item checking that certain threshold properties are valid in all threshold-faithful structures by checking that they are implied by $\assmp$
(carried out using quantifier free BAPA), and
\item checking that the verification conditions of the FO transition-system with the same threshold properties taken as axioms are valid (carried out in first-order logic, and in EPR if quantifier alternations are acyclic).
\end{inparaenum}

\subsection{Threshold Intersection Property Language}

Threshold properties are expressed in the \emph{threshold intersection property language} (\lang).
\lang is essentially a subset of BAPA, specialized to have the properties listed above. 

\paragraph{Syntax.}
\iflong
We define \lang as follows, where $t \in \thresholds$ is a threshold (of the form  $\frac{\ell}{k}$) and $\cSet \in \ParamSet$:
\else
We define \lang as follows, with $t \in \thresholds$ a threshold (of the form  $\frac{\ell}{k}$) and $\cSet \in \ParamSet$:
\fi
\begin{align*}
F & :: = B \neq \emptyset \mid B^c = \emptyset \mid \guard{t}(B) \mid F_1 \land F_2 \mid \forall x:\guard{t}. F \\
B & ::= \cSet \mid \cSet^c \mid x \mid x^c \mid \emptyset \mid \emptyset^c \mid  B_1 \cap B_2
\end{align*}
\iflong
\lang restricts the use of set cardinality to \emph{threshold guards} $\guard{t}(\Bterm)$ with the meaning $\card{\Bterm} \geq t$.
No other arithmetic atomic formulas \iflong ($\ell_1 = \ell_2$ or $\ell_1 < \ell_2$)  \fi are allowed.
\lang only allows guarded quantifiers over set variables,
that is, quantification is restricted to sets of a certain cardinality.
We exclude negation, disjunction and existential quantification in formulas.
We restrict comparison atomic formulas to $\Bterm \neq \emptyset$ and $\Bterm^c = \emptyset$\iflong, which correspond to asserting that the cardinality of the set represented by $\Bterm$ is \emph{at least} $1$, respectively $\nP$\fi.
Furthermore, we forbid set union and restrict complementation to atomic set terms.
\else
\lang restricts the use of set cardinality to \emph{threshold guards} $\guard{t}(\Bterm)$ with the meaning $\card{\Bterm} \geq t$.
No other arithmetic atomic formulas \iflong ($\ell_1 = \ell_2$ or $\ell_1 < \ell_2$)\fi are allowed.
Comparison atomic formulas are restricted to $\Bterm \neq \emptyset$ and $\Bterm^c = \emptyset$\iflong, which correspond to asserting that the cardinality of the set represented by $\Bterm$ is \emph{at least} $1$, respectively $\nP$\fi.
Quantifiers must be guarded, and negation, disjunction and existential quantification are excluded.
We forbid set union and restrict complementation to atomic set terms.
\fi
We refer to such formulas as \emph{intersection properties}  since they express properties of intersections of (atomic) sets.

\begin{example}
In Bosco, the following property captures the fact that the intersection of a set of at least
$\nP-\tP$ nodes and a set of more
than $\frac{\nP+3\tP}{2}$ nodes consists of at least $\frac{\nP-\tP}{2}$ non-faulty nodes. This is needed for establishing correctness of the protocol.
\begin{equation*}
\forall x:\guard{\nP-\tP}. \, \forall y:\guard{\frac{\nP + 3\tP + 1}{2}}. \; \guard{\frac{\nP - \tP + 1}{2}}(x \cap y \cap \fP^c)
\end{equation*}
\end{example}

\paragraph{Semantics.}

As \lang is essentially a subset of BAPA, we define its semantics by translating its formulas to BAPA,
where most constructs directly correspond to BAPA constructs, and guards are translated to cardinality constraints:
\iflong
Set terms (derived from $B$) are also set terms of BAPA, and most set formula constructs map to constructs of BAPA directly.
The only constructs that are not in BAPA are those involving guards, which correspond to a special case of cardinality constraints. We therefore define the following translation $\ToBAPA$:
\fi
\begin{align*}
&\ToBAPA(g_{\geq \frac{\ell}{k}}(\Bterm))  \eqdef k \cdot \card{\Bterm} \geq \ell  \qquad \ToBAPA(\forall x:g.\ \varphi)  \eqdef \forall x.\ \neg \ToBAPA(g(x)) \vee \ToBAPA(\varphi)
\end{align*}

The notions of structures, satisfaction, equivalence, validity, satisfiability, etc. are inherited from BAPA. In particular, given a set of BAPA resilience conditions $\assmp$ over the parameters $\Param$, we say that a \lang formula $\varphi$ is $\assmp$-valid, denoted $\assmp \models \varphi$, if $\assmp \models \ToBAPA(\varphi)$.

If $\assmp$ is quantifier-free (which is the typical case), $\assmp$-validity of \lang formulas can be checked via validity checks of quantifier-free BAPA formulas, supported by mature solvers.
Note that $\assmp$-validity of a formula of the form $\forall x:\guard{t_1}.\ \card{x \cap \Bterm} \geq t_2$ is equivalent to $\Gamma \models \forall u.\ u \geq t_1 \rightarrow u + \card{\Bterm} -n \geq t_2$, allowing to replace quantification over sets by quantification over integers, thus improving performance of existing solvers.

\subsection{Translation to FOL}
\label{sec:translation-to-fol}

To verify threshold-based protocols, we
translate \lang formulas to FO axioms, using the threshold sorts and relations.
To translate $\guard{t}(\Bterm)$, we follow the principle in (\Cref{subsec:decomp:fol-modeling}):

\noindent
\begin{tabular}{rclcrcl}
$ \ToFOL(\neg \varphi) $& = &$\neg \ToFOL(\varphi)$ & $\qquad$ &
$\ToFOL(\nodevar \in \Bterm^c)$ &=& $ \neg \ToFOL(\nodevar \in \Bterm) $
\\
$\ToFOL(\varphi_1 \wedge \varphi_2) $& = &$\ToFOL(\varphi_1) \wedge \ToFOL(\varphi_2)$&&
$\ToFOL(\nodevar \in \emptyset)$& =  &$\false$
\\
$\ToFOL(\forall\, x:g. \,\varphi)$ & = &$\forall\, x:\squorumof{g}\,. \ToFOL(\varphi)$&&
$\ToFOL(\nodevar \in \cSet)$& =& $\rmember_\cSet(\nodevar)$
\\
$\ToFOL(\nodevar \in \Bterm_1 \cap \Bterm_2) $&=& $\ToFOL(\nodevar \in \Bterm_1) \land \ToFOL(\nodevar \in \Bterm_2) $&&
$\ToFOL(\nodevar \in x) $&=&$ \rmember_t(\nodevar,x)$
\\
$\ToFOL(\Bterm \neq \emptyset)$ &=& $\exists \nodevar : \snode. \, \ToFOL(\nodevar \in \Bterm)$ &&
\multicolumn{3}{r}{where $x$ is guarded by $t$}
\\
$\ToFOL(\Bterm^c = \emptyset)$ &=& \multicolumn{5}{l}{$\forall \nodevar : \snode. \, \ToFOL(\nodevar \in \Bterm)$}
\\
$\ToFOL(\guard{t}(\Bterm)) $&=& \multicolumn{5}{l}{$\exists x:\squorumof{t}.\, \forall \nodevar:\snode. \, \rmember_t(\nodevar,x) \to \ToFOL(\nodevar \in \Bterm) $}
\\
\end{tabular}\\
We lift $\ToFOL$ to sets of formulas: $\ToFOL(\IPFassmp) = \{\ToFOL(\varphi) \mid \varphi \in \IPFassmp\}$.

\iflong\paragraph{Soundness of the translation to FOL.}\fi
Next, we state the soundness of the translation, which intuitively means that $\ToFOL(\varphi)$ is ``equivalent'' to $\varphi$ over threshold-faithful FO structures (\Cref{def:faithful}). This justifies
adding $\ToFOL(\varphi)$ as a FO axiom whenever $\varphi$ is $\assmp$-valid.

\begin{theorem}[Translation soundness]
\label{thm:soundness-translation}
Let $\structcore = (\Domcore, \Intcore)$ be a first-order structure over $\Voccore$, $\structbapa = (\Dombapa, \Intbapa)$ a compatible BAPA structure over $\Param$, and
$\struct$ the $\thresholds$-extension of $\structcore$ by $\structbapa$.
Then for every closed \lang formula $\varphi$, we have
$
\structbapa \models \varphi \Leftrightarrow
\struct \models \ToFOL(\varphi).
$ \end{theorem}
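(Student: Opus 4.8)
The plan is to prove the biconditional by structural induction on the \lang formula $\varphi$, simultaneously over all subformulas and subterms, since set terms $B$ and set formulas $F$ are mutually recursive in the grammar. The central semantic bridge is the correspondence between the BAPA universe and the FO node domain: by compatibility and the definition of the $\thresholds$-extension, $\Dombapa(\sbapaset) = \powerset{\Domcore(\snode)}$, so each BAPA set value is literally a subset of $\Dom(\snode)$, and an element $e \in \Dom(\snode)$ belongs to a set value $A$ in $\structbapa$ exactly when the corresponding FO membership holds in $\struct$. I would first make this precise as the induction hypothesis for terms: for every set term $B$ and every valuation (assigning set values to the set variables, equivalently elements of the threshold sorts to the FO variables), and every $e \in \Dom(\snode)$, we have $e \in \Val_B(B)$ in $\structbapa$ iff $\struct, \Val \models \ToFOL(\nodevar \in B)$ with $\nodevar$ interpreted as $e$.

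First I would dispatch the term-level cases: for $B = \cSet$, the equivalence is exactly clause (1) of the Soundness lemma (\Cref{lem:soundness-extension}), since $\Int(\rmember_\cSet) = \Intbapa(\cSet)$; for $B = x$ a guarded variable the value is a subset $A$ and membership is $\Int(\rmember_t)$, which by \Cref{def:extension} is plain set membership; the cases $\emptyset$, complementation, and intersection follow by unfolding the semantics and the matching $\ToFOL$ clauses. Then at the formula level, the guard case $\guard{t}(B)$ is the crux: I would invoke clause (2) of \Cref{lem:soundness-extension}, which states that $\struct \models \exists q : \squorumof{t}.\, \forall \nodevar.\, \rmember_t(\nodevar,q) \to \varphi(\nodevar)$ iff the set of nodes satisfying $\varphi$ has cardinality at least $\Intbapa(t)$. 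Applying this with $\varphi := \ToFOL(\nodevar \in B)$ and using the term-level induction hypothesis to identify $\{e \mid \struct,\Val[\nodevar\mapsto e]\models \ToFOL(\nodevar\in B)\}$ with the BAPA set value of $B$, the FO side reduces to $\card{\Val_B(B)} \geq \Intbapa(t)$, which is exactly the BAPA semantics of $\ToBAPA(\guard{t}(B))$. The remaining cases $B \neq \emptyset$ and $B^c = \emptyset$ are the ``at least $1$'' and ``all $\nP$'' specializations, handled directly by the translation of $\exists$/$\forall$ over $\snode$ and the term hypothesis.

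The cases $F_1 \land F_2$ and the guarded quantifier $\forall x:\guard{t}.\,F$ are then routine: conjunction commutes with both $\ToFOL$ and satisfaction, and for the quantifier the key point is that $\ToFOL$ sends $\forall x:g.\,\varphi$ to $\forall x:\squorumof{t}.\,\ToFOL(\varphi)$, so ranging the FO variable over $\Dom(\squorumof{t}) = \{A \mid \card{A} \geq \Intbapa(t)\}$ matches exactly the $\ToBAPA$-induced guarded range $\neg \ToBAPA(g(x)) \vee \cdots$ on the BAPA side; I would invoke the induction hypothesis on $F$ under each extended valuation. The main obstacle, and the step deserving the most care, is the guard case: one must be scrupulous that the \emph{existential witness} $q$ on the FO side (guaranteed by clause (2) of the lemma) and the \emph{subset} semantics on the BAPA side refer to the same family of node sets, which is precisely where well-definedness of the $\thresholds$-extension and $\assmp$-feasibility enter—indeed, if $\Dom(\squorumof{t})$ were empty the existential could never be witnessed, so I would note that the hypotheses of \Cref{lem:soundness-extension} (namely $\structbapa \models \assmp$ together with feasibility) guarantee a well-defined extension and thus the faithful two-way correspondence used throughout the induction.
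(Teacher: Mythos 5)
Your proof is correct and follows the approach the paper evidently intends (the paper itself states the theorem without giving the proof): a structural induction over the mutually recursive term/formula grammar of \lang, with the two clauses of \Cref{lem:soundness-extension} supplying exactly the non-trivial cases (set parameters and threshold guards) and the definition of the $\thresholds$-extension handling the rest. The one point worth making explicit is that clause (2) of \Cref{lem:soundness-extension} is stated for formulas over $\Voccore$ evaluated in $\structcore$, whereas your guard case applies it to $\ToFOL(\nodevar \in \Bterm)$, a formula over the full $\Sigma$ evaluated in $\struct$ --- a harmless generalization (its proof is unchanged), but it should be noted rather than cited verbatim.
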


\begin{corollary}
For every closed \lang formula $\varphi$ such that $\assmp \models \varphi$, we have that $\ToFOL(\varphi)$ is satisfied by every threshold-faithful first-order structure.
\end{corollary}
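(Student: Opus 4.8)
The plan is to derive the corollary directly from \Cref{thm:soundness-translation} by unfolding the definition of threshold-faithfulness together with the notion of $\assmp$-validity; essentially all the substantive work has already been done in the theorem, and the corollary is obtained by instantiating it at the right structures. Concretely, I fix a closed \lang formula $\varphi$ with $\assmp \models \varphi$, let $\struct$ be an arbitrary threshold-faithful structure over $\Sigma$, and aim to show $\struct \models \ToFOL(\varphi)$.

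First I would expand the hypothesis that $\struct$ is threshold-faithful. By \Cref{def:faithful}, $\struct$ is the $\thresholds$-extension of some FO structure $\structcore$ over $\Voccore$ by some compatible BAPA structure $\structbapa$ over $\Param$ with $\structbapa \models \assmp$; by \Cref{lem:soundness-extension} this extension exists and is unique. This exhibits the triple $(\structcore, \structbapa, \struct)$ in exactly the configuration required by the premises of \Cref{thm:soundness-translation}.

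Next I would transfer $\assmp$-validity to the particular structure $\structbapa$. By the definition of $\assmp$-validity, $\assmp \models \varphi$ means $\assmp \models \ToBAPA(\varphi)$ as BAPA formulas. Since $\structbapa \models \assmp$, semantic entailment yields $\structbapa \models \ToBAPA(\varphi)$, which is precisely $\structbapa \models \varphi$ because \lang satisfaction is inherited from BAPA via the translation $\ToBAPA$.

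Finally, applying \Cref{thm:soundness-translation} to $\structcore$, $\structbapa$, and their $\thresholds$-extension $\struct$ gives the equivalence $\structbapa \models \varphi \Leftrightarrow \struct \models \ToFOL(\varphi)$; combined with $\structbapa \models \varphi$ from the previous step, this yields $\struct \models \ToFOL(\varphi)$. As $\struct$ was an arbitrary threshold-faithful structure, the claim follows. I do not expect a genuine obstacle here: the only point that demands care is the bookkeeping that threshold-faithfulness supplies precisely a triple $(\structcore, \structbapa, \struct)$ meeting the theorem's hypotheses, and that the condition $\structbapa \models \assmp$ is exactly what lets the global $\assmp$-validity of $\varphi$ specialize to the pointwise fact $\structbapa \models \varphi$.
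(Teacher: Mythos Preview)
Your proposal is correct and matches the paper's approach: the paper states the corollary without proof, treating it as immediate from \Cref{thm:soundness-translation} together with the definition of threshold-faithful structures (\Cref{def:faithful}) and the definition of $\assmp$-validity. Your write-up simply spells out this immediate derivation.
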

\iflong
This justifies using the translation to FOL in order to generate first-order axioms from \lang formulas $\varphi$ that are entailed by the resilience conditions.
Namely, if $\assmp \models \varphi$, then  $\ToFOL(\varphi)$ may be safely added as a first-order axiom.
\fi

 \section{Automatically Inferring Threshold Intersection Properties}
\label{sec:axiom-synt}

\iflong
To apply the approach described in \Cref{sec:language,sec:decomp}
for verifying threshold-based protocols\else
To apply the approach described in \Cref{sec:language,sec:decomp}\fi,
it is crucial to find suitable
\iflong threshold properties for a given protocol.\else threshold properties.\fi That is, given the
resilience conditions $\assmp$ and a FO transition system
modeling the protocol, we need to find a set $\IPFassmp$ of \lang
formulas such that
\begin{inparaenum}[(i)]
    \item \label{pro:BapaValidity} $\assmp \models \varphi$ for every $\varphi \in \IPFassmp$, and
    \item \label{pro:FolValidity} the VCs of the transition system with the axioms
$\ToFOL(\IPFassmp)$ are valid.
\end{inparaenum}

In this section, we address the problem of automatically inferring such a set
$\IPFassmp$.
In particular, we prove that for any protocol that satisfies a natural condition, there are finitely many
$\assmp$-valid \lang formulas (up to equivalence), enabling a complete automatic inference
algorithm. Furthermore, we show that (under certain reasonable conditions formalized in this section), the FO axioms resulting from the inferred \lang properties have an \emph{acyclic} quantifier alternation graph, facilitating protocol verification in EPR.

\paragraph{Notation.}
For the rest of this section, we fix a set $\Param$ of parameters, a set $\assmp$ of resilience conditions over $\Param$, and a set $\thresholds$ of thresholds. Note that $\Bterm \neq \emptyset \equiv \guard{1}(\Bterm)$ and $\Bterm^c = \emptyset \equiv \guard{\nP}(\Bterm)$.
Therefore, for uniformity of the presentation, given a set $\thresholds$ of thresholds, we define $\thresholdsEx \eqdef \thresholds \cup \{1, \nP\}$ and replace atomic formulas of the form $\Bterm \neq \emptyset$ and $\Bterm^c = \emptyset$ by the corresponding guard formulas. As such, the only atomic formulas are of the form $\guard{t}(\Bterm)$ where $t \in \thresholdsEx$. Note that guards in quantifiers are still restricted to $\guard{t}$ where $t \in \thresholds$.
Given a set $\ParamSet$, we also denote $\hat{\ParamSet} = \ParamSet \cup \{\cSet^c \mid \cSet \in \ParamSet\}$.

\subsection{Finding Consequences in the Threshold Intersection Property Language}

In this section, we present \ITA -- an  algorithm for inferring all $\assmp$-valid \lang formulas.
A na\"ive (non-terminating) algorithm would iteratively check $\assmp$-validity of every \lang formula. 
\iflong 
Instead, \ouralg prunes the search space relying on the following condition, which essentially states that no guard is ``equivalent'' to $\guard{0}$ with respect to $\assmp$.
\else 
Instead, \ouralg prunes the search space relying on the following condition: 
\fi

\begin{definition}\label{def:nondegenrateGuard}
$\thresholds$ is \emph{$\assmp$-non-degenerate} if for every $t \in \thresholds$ it holds that $\assmp \not\models t \leq 0$.
\end{definition}
If $\assmp \models t \leq 0$ then $t$ is degenerate in the sense that $\guard{t}(\Bterm)$ is always $\assmp$-valid, and $\forall x: \guard{t}. \ \guard{t'}(x \cap \Bterm)$ is never $\assmp$-valid unless $t'$ is also degenerate. \iflong
(Note that checking if a threshold is degenerate amounts to a (non) entailment check in BAPA, which is decidable.)
\fi

\iflong
We observe that we can
\begin{inparaenum}[(i)]
\item push conjunctions outside of formulas (since $\forall$ distributes over $\wedge$), and assuming non-degeneracy,
\item ignore terms of the form $x^c$ as, under the assumption that $\thresholds$ is non-degenerate, they will not appear in $\assmp$-valid formulas.
\end{inparaenum}
These observations are formalized by the following lemma.
\else
We observe that we can
\begin{inparaenum}[(i)]
\item push conjunctions outside of formulas (since $\forall$ distributes over $\wedge$), and assuming non-degeneracy,
\item ignore terms of the form $x^c$:
\end{inparaenum}
\fi

\begin{lemma} \label{lem:simple}
If $\thresholds$ is $\assmp$-feasible and $\assmp$-non-degenerate, then for every $\assmp$-valid $\varphi$ in \lang, there exist $\varphi_1,\ldots,\varphi_m$ s.t.
$\varphi \equiv \bigwedge^m_{i=1}\varphi_i$ and for every $1 \leq i \leq m$, $\varphi_i$ is of the form:
\begin{align*}
    \forall x_1:\guard{t_1} \ldots  \forall x_q:\guard{t_q}.\ \guard{t}(x_1 \cap \ldots \cap x_q \cap a_{1} \ldots \cap a_{k})
\end{align*}
where $q+k>0$, $t_1,\ldots,t_q \in \thresholds$, $t \in \thresholdsEx$, $a_1,\ldots,a_k \in \hat{\ParamSet}$, and the $a_i$'s are distinct.
\end{lemma}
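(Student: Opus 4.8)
The plan is to put an arbitrary $\assmp$-valid \lang formula $\varphi$ into the target shape by a sequence of normalizing rewrites, carefully tracking which rewrites are unconditional equivalences and which hold only modulo $\assmp$, and to use feasibility together with non-degeneracy precisely to discard the ``bad'' syntactic shapes ($x^c$ and $\emptyset$) that the target form forbids.

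First I would reduce to a single-atom body. Since $\forall$ distributes over $\wedge$, i.e. $\forall x:\guard{t}.(F_1 \wedge F_2) \equiv (\forall x:\guard{t}.F_1) \wedge (\forall x:\guard{t}.F_2)$, and $\wedge$ is associative--commutative, a structural induction rewrites $\varphi$ (unconditionally) as $\bigwedge_{i} \psi_i$, where each $\psi_i$ is a guarded universal prefix $\forall x_1:\guard{t_1}\ldots\forall x_q:\guard{t_q}$ followed by a single atomic guard $\guard{t}(B_i)$, with $B_i$ an intersection of set literals drawn from $\{\cSet,\cSet^c,x_j,x_j^c,\emptyset,\emptyset^c\}$ (using associativity--commutativity of $\cap$). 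Because $\varphi \equiv \bigwedge_i \psi_i$ and $\varphi$ is $\assmp$-valid, each $\psi_i$ is itself $\assmp$-valid. I would then clean up $B_i$ with unconditional set identities: drop every occurrence of $\emptyset^c$ (the universe is the identity for $\cap$), collapse repeated literals by idempotence, and note that if both $\cSet$ and $\cSet^c$ occur then $B_i \equiv \emptyset$. After this, $B_i$ is an intersection of distinct literals from $\{x_1,\ldots,x_q\} \cup \{x_1^c,\ldots,x_q^c\} \cup \hat{\ParamSet} \cup \{\emptyset\}$.

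The crux -- and the step where both hypotheses are indispensable -- is showing that an $\assmp$-valid conjunct cannot contain $\emptyset$ or any complemented variable $x_j^c$. Reading $\psi_i$ in BAPA, it asserts $\forall x_1\ldots x_q.\,(\bigwedge_j |x_j| \geq t_j) \rightarrow k\cdot|B_i| \geq \ell$, where $t = \ell/k$. Now instantiate every universally quantified set to the full universe $\emptyset^c$; this is legal because feasibility gives $\assmp \models t_j \leq \nP$, and $|\emptyset^c| = \nP$, so each guard $|x_j| \geq t_j$ holds. If $B_i$ contained $\emptyset$ or some $x_j^c$ (which becomes $\emptyset$ under this instantiation), then $B_i$ evaluates to the empty set and the body reduces to $0 \geq \ell$, i.e. $t \leq 0$. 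But non-degeneracy gives $\assmp \not\models t \leq 0$ for $t \in \thresholds$ (and the remaining cases $t \in \{1,\nP\}$ reduce to $0 \geq 1$ and $0 \geq \nP$, both refutable in a model of $\assmp$ with nonempty universe), so there is a model of $\assmp$ in which the body fails -- contradicting $\assmp$-validity of $\psi_i$. Hence every surviving conjunct has $B_i$ built only from positive variables $x_j$ and distinct literals $a \in \hat{\ParamSet}$. I expect this to be the main obstacle, since it is exactly where feasibility (to supply $\emptyset^c$ as a witness satisfying the guards) and non-degeneracy (to force $t > 0$ in some model) jointly do the work.

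Finally I would trim the prefix. Any quantifier $\forall x_j:\guard{t_j}$ whose variable does not occur in $B_i$ can be dropped, since by feasibility its guard domain $\{A \mid |A| \geq t_j\}$ is nonempty in every model of $\assmp$; this is an $\assmp$-equivalence. Likewise, a conjunct whose body is the empty intersection (the full universe) asserts $|\emptyset^c| \geq t$, which is $\assmp$-valid by feasibility, hence $\assmp$-equivalent to $\true$ and deletable. What remains are conjuncts of exactly the stated form with $q + k > 0$, whose conjunction is $\assmp$-equivalent to $\varphi$. The cleanup equivalences hold modulo $\assmp$, which is all that is needed here: each retained $\varphi_i$ stays $\assmp$-valid, matching the intended use in the inference algorithm.
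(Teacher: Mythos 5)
Your proposal is correct and follows essentially the same route as the paper's proof: distribute the guarded universals over conjunction to isolate single-atom conjuncts, discard $\emptyset^c$ and duplicates by unconditional set identities, and then kill $\emptyset$ and $x^c$ by instantiating the quantified sets to the full universe --- using feasibility to satisfy the guards and non-degeneracy to exhibit a model where the resulting empty intersection violates the atomic guard. Your explicit treatment of unused quantifiers, the $q+k>0$ condition, and which equivalences hold only modulo $\assmp$ is slightly more careful than the paper's sketch, but the substance is the same.
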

We refer to $\varphi_i$ of the form above as \emph{simple}, and refer to $\guard{t}$ as its \emph{atomic guard}.

\ifproofs 

\begin{proof}[sketch]
Let $\varphi$ be a formula in \lang such that $\assmp \models \varphi$. Since universal quantification distributes over conjunction, there exist $\varphi_1,\ldots,\varphi_m$ such that
$\varphi \equiv \bigwedge^m_{i=1}\varphi_i$ and for every $1 \leq i \leq m$, the formula $\varphi_i$ is conjunction-free, i.e., $\varphi_i$ is a quantified atomic formula.

Since $\varphi$ is $\assmp$-valid, each of the $\varphi_i$'s is $\assmp$-valid as well.
This means, because $\thresholds$ is $\assmp$-non-degenerate, that the term $\emptyset$ cannot appear in the formula.
Furthermore, the term $\emptyset^c$ as well as duplicated instances of terms in $\varphi_i$ can be omitted, resulting in an equivalent formula.
As a result, the essence of the proof is to show that $x^c$ will not appear in $\assmp$-valid formulas.

Consider such $\varphi_i$ with atomic guard $t_b$ and a universally quantified variable $x$ with guard $t_a$.
Recall that $\thresholds$ is $\assmp$-non-degenerate and let $\structbapa = (\Dom, \Int)$ be such that $\structbapa \models \assmp$ but $\structbapa \not \models t_b \leq 0$, i.e., $\Int(t_b) > 0$.  Since $\thresholds$ is also $\assmp$-feasible, we have that in particular $\Int(t_a) \leq \Int(\nP)$. Therefore, taking a valuation where $\Val(x)=\Int(\emptyset^c)$ shows that $\structbapa \models \exists x.\ \card{x}\geq t_a \land \card{x^c} < t_b$, hence $\forall x: \guard{t_a} \ldots  .\ \guard{t_b}(x^c \cap \ldots)$ is not $\assmp$-valid.
\qed
\end{proof}

\fi

By \Cref{lem:simple}, it suffices to generate all \emph{simple} $\assmp$-valid formulas. Next, we show that this can be done more efficiently by pruning the search space based on a subsumption relation that is checked syntactically avoiding $\assmp$-validity checks.

\begin{definition}[Subsumption] \label{def:threshold-subsumption}
    For every $h_1, h_2 \in \thresholdsEx \cup \hat{\ParamSet}$, we denote $h_1 \bapaSubvalid h_2$ if one of the following holds:
    \begin{inparaenum}[(1)]
        \item $h_1 = h_2$, or
\item $h_1, h_2 \in \thresholdsEx$ and $\assmp \models h_1 \geq h_2$, or
\item $h_1 \in \hat{\ParamSet}$, $h_2 \in \thresholdsEx$ and $\assmp \models \card{h_1} \geq h_2$.
    \end{inparaenum}
\end{definition}
\iflong
When $h_1,h_2 \in \thresholdsEx$, $h_1 \bapaSubvalid h_2$ means that $\assmp \models \forall x:\guard{h_1}.\ \guard{h_2}(x)$, and when $h_1 \in \hat{\ParamSet}$ and $h_2 \in \thresholdsEx$, $h_1 \bapaSubvalid h_2$ means that $\assmp \models \guard{h_2}(h_1)$.
\else
For $h_1,h_2 \in \thresholdsEx$ and $h_3 \in \hat{\ParamSet}$, $h_1 \bapaSubvalid h_2$ means that $\assmp \models \forall x:\guard{h_1}.\ \guard{h_2}(x)$, and $h_3 \bapaSubvalid h_2$ means that $\assmp \models \guard{h_2}(h_3)$.
\fi
We  lift the relation $\bapaSubvalid$ to \iflong a partial order over \else act on \fi simple formulas:

\begin{definition}
    Given simple formulas
    \begin{align*}
    \alpha = &\forall x_1:\guard{h_1} \ldots \forall x_q: \guard{h_q}.\ \guard{t}(x_1 \cap \ldots \cap x_q \cap h_{q+1} \ldots \cap h_{k})\\
    \beta = & \forall x_1:\guard{h'_1} \ldots \forall x_{q'}:\guard{h'_{q'}}.\ \guard{t'}(x_1 \cap \ldots \cap x_{q'} \cap h'_{q'+1} \ldots \cap h'_{k'})
    \end{align*}
    we say that $\alpha \bapaSubvalid \beta$ if
    \begin{inparaenum}[(i)]
        \item $t \bapaSubvalid t'$, and
\item there exists an injective function $f:\{1, \ldots, k'\} \rightarrow \{1, \ldots, k\}$ s.t. for any $1 \leq i \leq k'$ it holds that $h'_i \bapaSubvalid h_{f(i)}$.
    \end{inparaenum}
\end{definition}

\begin{lemma}
\label{lem:subsumption}
Let $\alpha, \beta$ be simple formulas such that $\alpha \bapaSubvalid \beta$. If
$\assmp \models \alpha$ then $\assmp \models \beta$.
\end{lemma}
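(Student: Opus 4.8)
The plan is to reason semantically over BAPA structures. Fix an arbitrary $\structbapa = (\Dom,\Int)$ with $\structbapa \models \assmp$, and fix arbitrary sets $x_1,\ldots,x_{q'}$ witnessing $\beta$'s guards, i.e. $\card{x_i} \geq \Int(h'_i)$ for $1 \leq i \leq q'$. It suffices to show that $\beta$'s body holds, namely that $\card{x_1 \cap \ldots \cap x_{q'} \cap \Int(h'_{q'+1}) \cap \ldots \cap \Int(h'_{k'})} \geq \Int(t')$. The idea is to instantiate $\alpha$ (which holds in $\structbapa$, since $\assmp \models \alpha$) on sets chosen so that $\alpha$'s intersection sits inside $\beta$'s intersection; then the lower bound $\alpha$ provides on its intersection, together with $t \bapaSubvalid t'$, yields the desired bound on $\beta$'s intersection.

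The heart of the argument is choosing the sets $y_1,\ldots,y_q$ for $\alpha$'s quantified variables, guided by the injection $f$. For $j \in \{1,\ldots,q\}$ I set $y_j := x_i$ when $j = f(i)$ for a quantified $\beta$-position $i \leq q'$; $y_j := \Int(h'_i)$ when $j = f(i)$ for a parameter $\beta$-position $i > q'$ (which, by case (3) of $\bapaSubvalid$, forces $h_{f(i)}$ to be a threshold, i.e. $f(i) \leq q$); and $y_j := U$, the universal set, when $j$ is outside the image of $f$. Injectivity of $f$ makes this assignment well defined. I then check the guard $\card{y_j} \geq \Int(h_j)$ in each case: in the first, $\card{y_j} = \card{x_i} \geq \Int(h'_i) \geq \Int(h_j)$ because $h'_i \bapaSubvalid h_j$ on two thresholds gives $\Int(h'_i) \geq \Int(h_j)$; in the second, $\card{y_j} = \card{\Int(h'_i)} \geq \Int(h_j)$ directly by case (3); and for unused positions $\card{U} = \Int(\nP) \geq \Int(h_j)$, using $\assmp$-feasibility of $\thresholds$ to guarantee each threshold is at most $\nP$. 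With all guards satisfied, $\assmp \models \alpha$ gives $\card{y_1 \cap \ldots \cap y_q \cap \Int(h_{q+1}) \cap \ldots \cap \Int(h_k)} \geq \Int(t)$.

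It remains to prove the inclusion $y_1 \cap \ldots \cap y_q \cap \Int(h_{q+1}) \cap \ldots \cap \Int(h_k) \subseteq x_1 \cap \ldots \cap x_{q'} \cap \Int(h'_{q'+1}) \cap \ldots \cap \Int(h'_{k'})$, for which it is enough that the left-hand intersection is contained in every conjunct of the right-hand side. This is precisely what $f$ delivers: the $\beta$-conjunct at position $i$ is either the set $x_i = y_{f(i)}$ (quantified $i$), or the parameter $\Int(h'_i) = y_{f(i)}$ (parameter $i$ via case (3)), or the parameter $\Int(h'_i) = \Int(h_{f(i)})$ (parameter $i$ via case (1)); in every case it is literally one of the conjuncts of the left-hand intersection, hence a superset of it. Combining the inclusion, monotonicity of cardinality, the bound from $\alpha$, and $t \bapaSubvalid t'$ (giving $\Int(t) \geq \Int(t')$) yields $\card{x_1 \cap \ldots \cap x_{q'} \cap \Int(h'_{q'+1}) \cap \ldots \cap \Int(h'_{k'})} \geq \Int(t) \geq \Int(t')$, which is $\beta$'s body.

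I expect the main obstacle to be the type bookkeeping rather than any analytic difficulty: one must track, conjunct by conjunct, whether each $\beta$-factor is a quantified variable or a set parameter and which clause of $\bapaSubvalid$ justifies $h'_i \bapaSubvalid h_{f(i)}$, and observe that a quantified $\beta$-variable can only be matched to a quantified $\alpha$-variable (so that $y_{f(i)}$ is a set we are free to pick), while a parameter matches either the same parameter or a quantified $\alpha$-variable. Once this correspondence is pinned down, the guard checks and the inclusion both fall out mechanically.
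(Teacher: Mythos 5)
The paper records no proof for \Cref{lem:subsumption} (it is stated bare and only cited later), so there is nothing to compare yours against line by line; your semantic instantiation argument is correct and is evidently the intended one --- the gloss the paper gives right after \Cref{def:threshold-subsumption}, reading $h_1 \bapaSubvalid h_2$ as $\assmp \models \forall x:\guard{h_1}.\ \guard{h_2}(x)$ (resp.\ $\assmp \models \guard{h_2}(h_1)$ for a set parameter), is exactly the entailment your guard checks exploit. Your type bookkeeping is the real content and it is handled correctly: no clause of \Cref{def:threshold-subsumption} relates a threshold on the left to a member of $\hat{\ParamSet}$ on the right, so a quantified position of $\beta$ must map to a quantified position of $\alpha$; a parameter position of $\beta$ maps either to the identical parameter (clause~1) or to a quantifier position whose guard it satisfies (clause~3); injectivity of $f$ makes the assignment of the $y_j$ well defined; and the inclusion of $\alpha$'s instantiated intersection in $\beta$'s, together with $t \bapaSubvalid t'$, closes the argument.

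One point deserves to be made explicit. Your instantiation of the $\alpha$-quantifiers outside the image of $f$ by the universal set uses $\assmp$-feasibility of $\thresholds$, which is \emph{not} among the stated hypotheses of the lemma. This is not a flaw in your proof but a gap in the statement: without feasibility the claim is false. For instance, if $\assmp$ entails $t_1 = \nP+1$, $t_2 = 1$ and $\nP \geq 2$, then $\alpha = \forall x_1:\guard{t_1}.\,\forall x_2:\guard{t_2}.\ \guard{\nP}(x_1\cap x_2)$ is vacuously $\assmp$-valid and satisfies $\alpha \bapaSubvalid \beta$ for $\beta = \forall x_2:\guard{t_2}.\ \guard{\nP}(x_2)$, yet $\beta$ fails on any singleton. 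Feasibility is assumed in the surrounding results (\Cref{lem:simple}, \Cref{thrm:finiteConjunctiveFragment}, the termination corollary), so invoking it is legitimate, but it should be recorded as a hypothesis of the lemma; you were right to flag where it enters.
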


\begin{corollary} \label{cor:validityMaxQuantifier}
If no simple formula with $q$ quantifiers is $\assmp$-valid then no simple formula with more than $q$ quantifiers is $\assmp$-valid.
\end{corollary}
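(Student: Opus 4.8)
The plan is to obtain the corollary from \Cref{lem:subsumption} by contraposition, exploiting the fact that erasing a universally quantified conjunct from a simple formula produces a weaker (subsumed) simple formula with one fewer quantifier. So I fix $q$, assume the hypothesis that no simple formula with $q$ quantifiers is $\assmp$-valid, and suppose toward a contradiction that some simple formula
\[
\alpha = \forall x_1:\guard{h_1} \ldots \forall x_Q:\guard{h_Q}.\ \guard{t}(x_1 \cap \ldots \cap x_Q \cap a_1 \cap \ldots \cap a_k)
\]
with $Q > q$ quantifiers is $\assmp$-valid (here $a_1,\ldots,a_k \in \hat{\ParamSet}$). From $\alpha$ I build a candidate with exactly $q$ quantifiers by discarding $Q-q$ of the quantified variables while keeping all set-parameter conjuncts, namely
\[
\beta = \forall x_1:\guard{h_1} \ldots \forall x_q:\guard{h_q}.\ \guard{t}(x_1 \cap \ldots \cap x_q \cap a_1 \cap \ldots \cap a_k).
\]

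Next I would verify the two easy facts that make this work. First, $\beta$ is again a simple formula, now with exactly $q$ quantifiers, as long as its intersection is non-empty. Second, $\alpha \bapaSubvalid \beta$: the atomic guards are identical, so $t \bapaSubvalid t$ holds by reflexivity; and every conjunct of $\beta$ is literally a conjunct of $\alpha$, so the map sending each retained slot of $\beta$ to the identical slot of $\alpha$ is an injection $f$ satisfying $h'_i \bapaSubvalid h_{f(i)}$ for every $i$ (again by reflexivity). Hence the hypotheses of \Cref{lem:subsumption} are met, and it yields $\assmp \models \beta$. Since $\beta$ has exactly $q$ quantifiers, this contradicts the assumption and proves the corollary.

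The only place that requires care --- and what I would single out as the main (albeit minor) obstacle --- is guaranteeing that the retained intersection in $\beta$ is non-empty, so that $\beta$ is a genuine simple formula in the sense of \Cref{lem:simple}. This is automatic whenever $q \geq 1$, and also whenever at least one set-parameter conjunct $a_i$ survives. The sole boundary case is $q = 0$ together with $\alpha$ having no set parameters, where the retained intersection is empty: this empty intersection denotes the universal set $\emptyset^c$, and $\guard{t}(\emptyset^c)$ is equivalent to $t \leq \nP$, hence $\assmp$-valid by $\assmp$-feasibility of $\thresholds$. Under the convention of \Cref{lem:simple} that $\emptyset^c$ conjuncts are omitted, this degenerate formula already populates the $q=0$ level, so the reduction --- and thus the corollary --- goes through for every $q$.
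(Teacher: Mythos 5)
Your proof is correct and is precisely the argument the paper intends --- the paper states this corollary without proof as an immediate consequence of \Cref{lem:subsumption}, namely: drop $Q-q$ of the quantified conjuncts, note that the identity injection on the retained slots together with reflexivity of $\bapaSubvalid$ witnesses $\alpha \bapaSubvalid \beta$, and conclude $\assmp \models \beta$ by \Cref{lem:subsumption}. The one caveat is your patch for the $q=0$, $k=0$ boundary case: $\guard{t}(\emptyset^c)$ is \emph{not} a simple formula in the sense of \Cref{lem:simple} (which requires $q+k>0$ with the $a_i$ drawn from $\hat{\ParamSet}$, and explicitly normalizes $\emptyset^c$ away), so it does not ``populate the $q=0$ level''; the corollary is genuinely delicate at $q=0$ when no valid $0$-quantifier simple formula over $\hat{\ParamSet}$ exists, but this does not affect your argument for $q\geq 1$, which is the only regime the rest of the paper relies on.
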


\noindent \Cref{fig:axiom-synt-alg} depicts \ouralg that generates all $\assmp$-valid simple formulas, relying on \Cref{lem:subsumption}.
\iflong
\Cref{fig:axiom-synt-alg} is designed to emit all $\assmp$-valid simple formulas, including ones that are subsumed or entailed by others, since BAPA entailment does not imply entailment between the FO translations\footnote{It is worth mentioning that if the FO formulas that encode subsumption over $\thresholdsEx \cup \hat{\ParamSet}$ (\Cref{def:threshold-subsumption}) are included in the FO translations (these are the formulas $\ToFOL(\forall x:\guard{h}.\ \guard{t}(x))$ for $h \bapaSubvalid t$ and $\ToFOL(\guard{t}(a))$ for $a \bapaSubvalid t$ where $h,t \in \thresholdsEx$ and $a\in \hat{\ParamSet}$), then subsumption between simple formulas does actually imply entailment between the FO translations.
}.
\fi
\ouralg uses a na\"ive search strategy;  different strategies can be used (e.g.~\cite{marco}).
Based on \Cref{cor:validityMaxQuantifier}, \ouralg terminates if for some number of quantifiers no $\assmp$-valid formula is discovered.

\begin{figure}[t]
\begin{algorithm}[H]

\KwIn{$\ParamSet$, $\thresholds$, $\assmp$}
set checked\_true = checked\_false =  [ ] \;
\ForEach {$q = 0,1,\ldots$} {
    \ForEach {simple formula $\varphi$ over $\thresholds$ and $\ParamSet$ with $q$ quantifiers} {
        \lIf {exists $\psi \in$ checked\_true s.t. $\psi \bapaSubvalid \varphi$ } {
            yield $\varphi$ }
        \lElseIf {exists $\psi \in$ checked\_false s.t. $\varphi \bapaSubvalid \psi$ } {
            continue }
        \lElseIf {$\assmp \models \varphi$} {
              yield $\varphi$ ; add $\varphi$ to checked\_true }
        \lElse {
            add $\varphi$ to checked\_false }
    }
    \lIf {no formulas were added to checked\_true} {
        terminate }
}
\caption{\label{fig:axiom-synt-alg} Algorithm for Inferring Intersection Properties (\ITA)}
    \end{algorithm}
\end{figure}

\begin{lemma}[Soundness] \label{lem:soundness}
Every formula $\varphi$ that is returned by the algorithm is $\assmp$-valid.
\end{lemma}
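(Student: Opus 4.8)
The plan is to argue by maintaining a loop invariant on the list \texttt{checked\_true}—namely that every formula it contains is $\assmp$-valid—and then to perform a case analysis on the two branches in which \ouralg executes a \texttt{yield}.

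First I would establish the invariant. Inspecting the pseudocode, a formula is added to \texttt{checked\_true} only in the branch reached \emph{after} the explicit test $\assmp \models \varphi$ succeeds. Hence, by construction, every element ever placed in \texttt{checked\_true} has already passed a direct BAPA validity check, so the invariant holds throughout the execution. This is a purely structural observation about the control flow and requires no appeal to subsumption.

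Next I would fix an arbitrary formula $\varphi$ that the algorithm yields and split according to which of the two \texttt{yield} statements produced it. In the branch guarded by ``$\assmp \models \varphi$'', the formula is emitted immediately after that test succeeded, so $\varphi$ is $\assmp$-valid by definition. In the branch guarded by the existence of some $\psi \in \texttt{checked\_true}$ with $\psi \bapaSubvalid \varphi$, the invariant gives $\assmp \models \psi$, and since $\psi \bapaSubvalid \varphi$, \Cref{lem:subsumption} yields $\assmp \models \varphi$. In both cases $\varphi$ is $\assmp$-valid, which is exactly the claim.

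The argument is an invariant-maintenance proof, so I do not expect a substantial obstacle; the only subtlety is to keep the dependencies straight. The soundness of the first (subsumption-based) yield branch relies on the invariant, while the invariant itself is justified \emph{solely} by the branch that performs the direct validity check. Making this dependency explicit—validity enters \texttt{checked\_true} only through direct BAPA checks, and \Cref{lem:subsumption} then propagates it across $\bapaSubvalid$—is the crux of the proof.
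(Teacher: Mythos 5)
Your proof is correct and matches the paper's argument: the paper's own proof is just the one-line remark that soundness ``follows directly from \Cref{lem:subsumption}'', and your invariant-maintenance write-up is exactly the natural expansion of that remark, using the same case split on the two yield branches and the same key lemma.
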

\vspace{-0.5cm}
\begin{lemma}[Completeness] \label{lem:completness}
If $\thresholds$ is $\assmp$-feasible and $\assmp$-non-degenerate,
then for every $\assmp$-valid \lang formula $\varphi$ there exist $\varphi_1\ldots \varphi_m$ s.t.
$\varphi \equiv \bigwedge_{i=1}^m \varphi_i$ and
$\ITA$ yields every $\varphi_i$.
\end{lemma}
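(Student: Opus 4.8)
The plan is to reduce the statement to one claim about simple formulas and then analyze the behaviour of \ITA on them. Given an $\assmp$-valid \lang formula $\varphi$, I first invoke \Cref{lem:simple} to obtain simple formulas $\varphi_1,\ldots,\varphi_m$ with $\varphi \equiv \bigwedge_{i=1}^m \varphi_i$. Since $\assmp \models \varphi$ and each $\varphi_i$ is a conjunct of an $\assmp$-valid conjunction, each $\varphi_i$ is itself $\assmp$-valid. Taking these $\varphi_i$ as the required decomposition, it suffices to prove the core claim: \emph{\ITA yields every $\assmp$-valid simple formula}.

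\textbf{Every processed valid formula is yielded.}
Next I show that whenever \ITA reaches a given $\assmp$-valid simple formula $\psi$ in the inner loop of \Cref{fig:axiom-synt-alg}, it yields it. There are three branches. The ``skip'' branch fires only if some $\chi \in \texttt{checked\_false}$ satisfies $\psi \bapaSubvalid \chi$; but every element of \texttt{checked\_false} failed its validity test and is hence not $\assmp$-valid, whereas \Cref{lem:subsumption} applied to $\assmp \models \psi$ and $\psi \bapaSubvalid \chi$ would force $\assmp \models \chi$, a contradiction. So this branch cannot fire for $\psi$. The other two branches (subsumption by a \texttt{checked\_true} element, or a successful validity check $\assmp \models \psi$) both yield $\psi$. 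Hence every processed $\assmp$-valid simple formula is yielded; since $\thresholds$, $\thresholdsEx$ and $\hat{\ParamSet}$ are finite there are finitely many simple formulas at each level, so each inner loop terminates.

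\textbf{The main obstacle: reaching $\psi$ before termination.}
The hard part is guaranteeing that \ITA actually \emph{reaches} the level of $\psi$, i.e.\ that the termination test never fires prematurely. The key structural observation is that subsumption between simple formulas is monotone in the number of quantifiers: if $\chi \bapaSubvalid \psi$ then $\chi$ has at least as many quantifiers as $\psi$. Indeed, by \Cref{def:threshold-subsumption}, a quantified component of $\psi$ (whose guard lies in $\thresholdsEx$) can be $\bapaSubvalid$-matched only to a component of $\chi$ lying in $\thresholdsEx$ — a set parameter in $\hat{\ParamSet}$ is never $\bapaSubvalid$-related to a threshold on that side — so the injective map witnessing $\chi \bapaSubvalid \psi$ sends the quantified components of $\psi$ into the quantified components of $\chi$. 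Consequently, a formula at level $q$ can be subsumed only by \texttt{checked\_true} entries that are themselves at level $q$.

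\textbf{Non-termination and conclusion.}
I then argue that \ITA does not stop before the level $Q$ of $\psi$. Since $\psi$ is an $\assmp$-valid simple formula with $Q$ quantifiers, \Cref{cor:validityMaxQuantifier} (in contrapositive) yields, for every $q < Q$, some $\assmp$-valid simple formula with exactly $q$ quantifiers. Consider the first $\assmp$-valid simple formula encountered while processing level $q$: at that moment \texttt{checked\_true} contains no level-$q$ entry (entries are added only in the validity branch, so any earlier level-$q$ entry would be $\assmp$-valid, contradicting ``first''), and by the monotonicity observation no lower-level entry can subsume it; its ``skip'' branch cannot fire by the argument above, so it passes the validity check and is added to \texttt{checked\_true}. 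Thus at least one formula is added to \texttt{checked\_true} at each level $q < Q$, the termination test fails there, and the algorithm advances to level $q+1$. By induction it reaches level $Q$, processes $\psi$, and (by the second step) yields it. Together with \Cref{lem:simple} this shows \ITA yields every $\varphi_i$, establishing completeness.
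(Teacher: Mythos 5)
Your proof is correct and follows essentially the route the paper indicates: the paper's own "proof" of this lemma is a one-line sketch citing \Cref{lem:simple}, \Cref{lem:subsumption} and \Cref{cor:validityMaxQuantifier}, and your argument is a faithful elaboration of exactly that plan (decompose via \Cref{lem:simple}, use \Cref{lem:subsumption} to show the branches never misclassify a valid formula, use \Cref{cor:validityMaxQuantifier} to show the level of the target formula is reached). The one ingredient you supply that the paper leaves implicit is the monotonicity observation that $\chi \bapaSubvalid \psi$ forces $\chi$ to have at least as many quantifiers as $\psi$ (because a quantifier guard in $\thresholdsEx$ can only be matched to another element of $\thresholdsEx$); this is genuinely needed to rule out premature firing of the termination test, and you identify and prove it correctly.
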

\ifproofs
The proof to \Cref{lem:soundness} follows directly from \Cref{lem:subsumption}.
Proof \Cref{lem:completness} is obtained by using \Cref{lem:simple,lem:subsumption}, and \Cref{cor:validityMaxQuantifier}.
\fi
Next, we characterize the cases in which there are finitely many $\assmp$-valid \lang formulas, up to equivalence, and thus, \ouralg is guaranteed to terminate.

\iflong
\begin{definition} \label{def:sanity}
$\thresholds$ is \emph{$\assmp$-sane} if for every $t_1, t_2 \in \thresholds$,  $\assmp \not\models t_1 \leq 0 \lor t_2 > \nP - 1$.
\end{definition}
Sanity implies that no threshold in $\thresholds$ is equivalent to $0$ or to $\nP$ under $\assmp$ (in particular, this implies non-degeneracy). In fact, it captures a stronger requirement that for every pair of thresholds, there is a model of $\assmp$ in which one of them is not interpreted as $0$ and the other is not interpreted as $\nP$.
\else
\begin{definition}
 $\thresholds$ is \emph{$\assmp$-sane} if for every $t_1, t_2 \in \thresholds$,  $\assmp \not\models t_1 \leq 0 \lor t_2 > \nP - 1$. $(\thresholds,\ParamSet)$ is \emph{$\assmp$-sane} if, in addition, for every $t_1 \in \thresholds$, $a \in \hat{\ParamSet}$,
 $\assmp \not\models t_1 \leq 0 \lor \card{a} = \nP$. \end{definition}
\fi

\begin{theorem}\label{thrm:finiteConjunctiveFragment}
    Assume that $\thresholds$ is $\assmp$-feasible.
    Then the following conditions are equivalent:
    \iflong
    \begin{enumerate}
    \else
    \begin{inparaenum}[(1)]
    \fi
    \item\label{item:simple} There are finitely many $\assmp$-valid simple formulas.
    \item\label{item:ipf} There are finitely many $\assmp$-valid \lang formulas, up to equivalence.
    \item \label{equ:terminationCondition} $\thresholds$ is \emph{$\assmp$-sane}.
     \iflong
    \end{enumerate}
    \else
    \end{inparaenum}
    \fi
\end{theorem}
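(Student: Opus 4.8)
The plan is to prove the cycle of implications $(\ref{equ:terminationCondition}) \Rightarrow (\ref{item:simple}) \Rightarrow (\ref{item:ipf}) \Rightarrow (\ref{equ:terminationCondition})$, where $(\ref{item:simple}) \Rightarrow (\ref{item:ipf})$ is an immediate consequence of \Cref{lem:simple}, the direction $(\ref{item:ipf}) \Rightarrow (\ref{equ:terminationCondition})$ is an explicit counterexample construction, and $(\ref{equ:terminationCondition}) \Rightarrow (\ref{item:simple})$ is the technical core.

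For $(\ref{item:simple}) \Rightarrow (\ref{item:ipf})$: let $\Phi$ be the finite set of $\assmp$-valid simple formulas. By \Cref{lem:simple} every $\assmp$-valid \lang formula $\varphi$ is equivalent to a conjunction $\bigwedge_i \varphi_i$ of simple formulas, and since a conjunction is $\assmp$-valid iff each conjunct is, every $\varphi_i \in \Phi$. Thus $\varphi$ is equivalent to a conjunction over a subset of $\Phi$, and there are finitely many such subsets, bounding the number of equivalence classes. For $(\ref{item:ipf}) \Rightarrow (\ref{equ:terminationCondition})$ I argue the contrapositive. If $\thresholds$ is not $\assmp$-sane, pick $t_1, t_2 \in \thresholds$ with $\assmp \models t_1 \leq 0 \lor t_2 > \nP - 1$ and set $\psi_q \eqdef \forall x_1:\guard{t_2} \ldots \forall x_q:\guard{t_2}.\ \guard{t_1}(x_1 \cap \ldots \cap x_q)$. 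Each $\psi_q$ is $\assmp$-valid: in a model of $\assmp$ with $t_1 \leq 0$ the atomic guard holds vacuously, while in a model with $t_2 > \nP - 1$ feasibility forces $t_2 = \nP$, so every $x_i$ is the full universe and the intersection has size $\nP \geq t_1$ (again by feasibility). Evaluating the $\psi_q$ in BAPA structures \emph{not} satisfying $\assmp$ (with $t_2$ strictly below $\nP$ and a large universe) separates them, so they fall into infinitely many equivalence classes, contradicting $(\ref{item:ipf})$.

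The heart is $(\ref{equ:terminationCondition}) \Rightarrow (\ref{item:simple})$. I first characterize validity of a simple formula $\varphi = \forall x_1:\guard{t_1} \ldots \forall x_q:\guard{t_q}.\ \guard{t}(x_1 \cap \ldots \cap x_q \cap a_1 \cap \ldots \cap a_k)$. Writing $b = |a_1 \cap \ldots \cap a_k|$ and observing that the complements of the $x_i$ can be packed as disjointly as the universe allows, the minimum of the intersection's cardinality over all admissible $x_i$ in a model is $\max(0,\, b - \sum_{i=1}^q (\nP - t_i))$. Hence $\varphi$ is $\assmp$-valid iff in every model of $\assmp$ with $t \geq 1$ one has $b - \sum_{i=1}^q (\nP - t_i) \geq t$ (the constraint being vacuous where $t \leq 0$). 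By \Cref{cor:validityMaxQuantifier} it suffices to bound the number of quantifiers $q$. Group the guards by threshold, letting $m_s$ be the multiplicity of $s \in \thresholds$. For the pair consisting of the atomic guard $t$ and a quantifier threshold $s$, $\assmp$-sanity supplies a \emph{fixed} model $M$ of $\assmp$ in which simultaneously $t \geq 1$ and $s \leq \nP - 1$; the boundary cases of the atomic guard are handled directly, since $t = 1$ is always positive and $t = \nP$ admits no $\assmp$-valid formula with $q \geq 1$ (it would force every $t_i \equiv \nP$, impossible under sanity). In this fixed $M$, using $\nP - s \geq 1$ and that all terms are nonnegative (feasibility), the validity inequality yields $m_s \leq m_s(\nP - s) \leq \sum_i (\nP - t_i) \leq b - t \leq \nP - t$, all evaluated in $M$ — a constant independent of $\varphi$. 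Taking the maximum over the finitely many pairs bounds each $m_s$, hence bounds $q = \sum_s m_s$; since $\thresholds$, $\hat{\ParamSet}$, and $\thresholdsEx$ are finite, only finitely many simple formulas remain, establishing $(\ref{item:simple})$.

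I expect the main obstacle to be exactly this last implication: correctly computing the worst-case intersection cardinality (the $\max(0,\cdot)$ branch and the fixed contribution $b$ of the set-parameter conjuncts), and, crucially, exploiting the precise \emph{pairwise} form of sanity so that one fixed model of $\assmp$ simultaneously activates the validity constraint ($t \geq 1$) and exhibits a positive gap $\nP - s \geq 1$. It is this single model that turns a per-model bound into a \emph{uniform} bound on the number of quantifiers; without the pairwise quantifier the constraint could be vacuous in every model witnessing a positive gap. The only remaining care is the ``up to equivalence'' bookkeeping in $(\ref{item:ipf})$, which is discharged by distinguishing formulas in BAPA structures outside $\assmp$.
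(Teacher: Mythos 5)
Your proof is correct and follows essentially the same route as the paper: the same cycle of implications $(3)\Rightarrow(1)\Rightarrow(2)\Rightarrow(3)$, the same witness family for $\neg(3)\Rightarrow\neg(2)$ (quantifiers guarded by the threshold forced above $\nP-1$, atomic guard the one forced below $0$), and for $(3)\Rightarrow(1)$ the same extremal counterexample---complements of the $x_i$ packed disjointly inside the universe of a sanity-witnessing model of $\assmp$, yielding a per-threshold multiplicity bound of about $\Int(\nP)$ that is then summed over $\thresholds$---which the paper packages as \Cref{lem:terminationHelper} (the assignment $\Val(x_i)=U\setminus\{e_i\}$) plus a subsumption step, and you package as the closed-form criterion $\max(0,\,b-\sum_i(\nP-t_i))\geq t$. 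The one place you are more explicit than the paper is the pairwise inequivalence of the $\psi_q$ in $(2)\Rightarrow(3)$, which the paper's proof leaves implicit.
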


\ifproofs 

\noindent The proof that (\ref{equ:terminationCondition}) implies (\ref{item:simple})
uses the following lemma.

\begin{lemma}\label{lem:terminationHelper}
    If $\thresholds$ is $\assmp$-sane, then for every $t_a\in \thresholds$ and $t_b \in \thresholdsEx$, there exists  a number $Q_{t_a,t_b}$ s.t. for every $q \geq Q_{t_a,t_b}$,
$
        \assmp \not\models \forall x_1 : \guard{t_a}.  \ldots  x_{q} : \guard{t_a}. \guard{t_b}(x_1 \cap  \ldots  \cap x_{q}).
    $
\end{lemma}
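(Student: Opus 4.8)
The plan is to prove this non-validity result by exhibiting, for each sufficiently large $q$, a \emph{single} model of $\assmp$ in which the formula
\[
\varphi_q \;\eqdef\; \forall x_1 : \guard{t_a}. \ldots \forall x_{q} : \guard{t_a}.\ \guard{t_b}(x_1 \cap \ldots \cap x_{q})
\]
fails. Via the translation $\ToBAPA$, $\assmp \models \varphi_q$ holds iff in every BAPA structure $\structbapa \models \assmp$ each $q$-tuple of sets of cardinality at least $\Int(t_a)$ has an intersection of cardinality at least $\Int(t_b)$. Hence it suffices to produce one $\structbapa \models \assmp$ together with sets $x_1,\ldots,x_q$ of size $\geq \Int(t_a)$ whose intersection has size $< \Int(t_b)$.

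First I would fix a convenient model using $\assmp$-sanity (\Cref{def:sanity}). The key is to obtain a single structure in which $t_b$ is positive while $t_a$ leaves room below $\nP$. If $t_b \in \thresholds$, I apply sanity to the ordered pair $(t_1,t_2) = (t_b, t_a)$, obtaining $\structbapa \models \assmp$ with $\Int(t_b) > 0$ and $\Int(t_a) \leq \Int(\nP) - 1$. If $t_b \in \{1,\nP\}$ (the two extra members of $\thresholdsEx$), then $t_b$ is positive in any structure with nonempty universe, so I instead apply sanity to $(t_a,t_a)$ to get $\Int(t_a) \leq \Int(\nP)-1$; this together with feasibility forces $\Int(\nP) \geq 1$ and hence $\Int(t_b) > 0$. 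In all cases, writing $n = \Int(\nP)$, $b = \Int(t_b)$, and letting $a = \max(0,\lceil \Int(t_a)\rceil)$ be the effective minimum size enforced by $\guard{t_a}$, I obtain a model with $b > 0$, $n \geq 1$, and $a < n$ (so $n-a \geq 1$).

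The heart of the argument is then a counting construction carried out inside this fixed structure. I set $Q_{t_a,t_b} = \lceil n/(n-a)\rceil$. For any $q \geq Q_{t_a,t_b}$ we have $q(n-a) \geq n$, so $q$ sets, each of size exactly $n-a \geq 1$, have enough total capacity to cover the universe $U$ of size $n$; a standard covering argument produces complements $x_1^c,\ldots,x_q^c$ whose union is $U$, i.e. sets $x_i$ of size exactly $a \geq \Int(t_a)$ with $\bigcap_i x_i = \emptyset$ (when $a=0$ this is immediate by taking all $x_i = \emptyset$). Since $\card{\bigcap_i x_i} = 0 < b$, the structure witnesses $\structbapa \not\models \ToBAPA(\varphi_q)$, and as $\structbapa \models \assmp$ this gives $\assmp \not\models \varphi_q$ for every $q \geq Q_{t_a,t_b}$.

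The step I expect to be the main obstacle is the bookkeeping around the corner thresholds: $t_b$ ranges over $\thresholdsEx = \thresholds \cup \{1,\nP\}$, whereas $\assmp$-sanity only constrains pairs drawn from $\thresholds$. Thus the cases $t_b \in \{1,\nP\}$ must be handled separately, and in every case one must verify that a \emph{single} model simultaneously makes $t_b$ positive and keeps $a$ strictly below $n$. This is precisely why sanity is phrased over ordered pairs, and why it is invoked here with $t_b$ in the first (bounded-below-by-$0$) slot and $t_a$ in the second (bounded-above-by-$\nP-1$) slot; the remaining combinatorial covering claim is routine.
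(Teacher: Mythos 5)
Your proof is correct and follows essentially the same route as the paper's: invoke $\assmp$-sanity (applied to the pair $(t_b,t_a)$, or to $(t_a,t_a)$ for the corner cases $t_b\in\{1,\nP\}$) to obtain a single model of $\assmp$ with $\Int(t_b)>0$ and $\Int(t_a)\le\Int(\nP)-1$, and then cover the universe by the complements of admissibly large sets so that their intersection is empty; the paper simply fixes $Q_{t_a,t_b}=\Int(\nP)$ with witnesses $U\setminus\{e_i\}$ and extends to larger $q$ via \Cref{lem:subsumption}, whereas you take a slightly smaller $Q_{t_a,t_b}$ and handle every $q\ge Q_{t_a,t_b}$ directly. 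One cosmetic nit: your appeal to feasibility in the $t_b\in\{1,\nP\}$ case is unnecessary (and not among the lemma's hypotheses) --- sanity applied to $(t_a,t_a)$ already gives $\Int(t_a)>0$ together with $\Int(t_a)\le\Int(\nP)-1$, hence $\Int(\nP)>1$.
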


\begin{proof}[of \Cref{lem:terminationHelper}]
    Let $t_a \in \thresholds,t_b\in\thresholdsEx$ be arbitrary guards. As $\thresholds$ is $\assmp$-sane, there is a structure $\structbapa = (\Dom, \Int)$ such that $\structbapa \models \assmp$ and $\structbapa \models t_a \leq \nP-1 \land t_b > 0$ if $t_b \in \thresholds$, or $\structbapa \models t_a \leq \nP-1 \land t_a > 0$ otherwise. Either way, we have that $\Int(t_b) > 0$ and $\Int(t_a) \leq \Int(\nP) - 1$.

    We define $Q_{t_a,t_b} = \Int(\nP)$. Wlog the universal set of $\structbapa$ is $U = \{e_1 \ldots e_{Q_{t_a,t_b}}\}$. We define a valuation $\Val$ for $\structbapa$ in which for every $1 \leq i \leq Q_{t_a,t_b}$ we set $\Val(x_i) = E_i = U \setminus \{e_i\}$. We then get that
$|\Val(x_i)|\ge \Int(t_a)$ for all $1\le i\le Q_{t_a,t_b}$, and $|\cap_{1\le i\le Q_{t_a,t_b}} \Val(x_i)| = 0 < \Int(t_b)$.
Hence, $
        \assmp \not\models \forall x_1 : \guard{t_a}.  \ldots  x_{q} : \guard{t_a}. \guard{t_b}(x_1 \cap  \ldots  \cap x_{q}).
    $
    By \Cref{lem:subsumption} this also holds for any $q \geq Q_{t_a,t_b}$.
\qed
    \end{proof}

\begin{proof}[of \Cref{thrm:finiteConjunctiveFragment}]
By \Cref{lem:simple}, (\ref{item:simple}) implies (\ref{item:ipf}).
Then (\ref{item:ipf}) implies (\ref{equ:terminationCondition}) since whenever  $\assmp \models t_a > \nP-1 \lor t_b \leq 0$, then any formula of the form $\forall x_1:\guard{t_a}.  \ldots  \forall x_q:\guard{t_a}. \guard{t_b}(x_1 \cap  \ldots  \cap x_q)$ is $\assmp$-valid.
Finally we show that~(\ref{equ:terminationCondition}) implies~(\ref{item:simple}).  Let $Q = \sum_{t_a \in \thresholds} \max_{t_b \in \thresholdsEx} {Q_{t_a,t_b}}$, where $Q_{t_a,t_b}$ is the number defined by \Cref{lem:terminationHelper}.
    Then every $\assmp$-valid simple \lang formula $\alpha$ must have less than $Q$ quantifiers, as otherwise there exists $t_a \in \thresholds$ such that at least
    $\max_{t_b \in \thresholdsEx} {Q_{t_a,t_b}}$ quantifiers have guard $\guard{t_a}$. Suppose that $\guard{t}$ is the atomic guard of $\alpha$, then this means that
    at least $Q_{t_a,t}$ quantifiers have guard $\guard{t_a}$, and thus, from \Cref{lem:subsumption} and \Cref{lem:terminationHelper}, we get that $\assmp \not\models \alpha$.
\qed
\end{proof}

\fi

\begin{corollary}[Termination]
   If $\thresholds$ is $\assmp$-feasible and $\assmp$-sane,
\ouralg terminates.
\end{corollary}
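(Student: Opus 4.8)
The plan is to derive termination of \ouralg from the finiteness established in \Cref{thrm:finiteConjunctiveFragment} together with the monotonicity of validity in the number of quantifiers. Since $\thresholds$ is $\assmp$-feasible and $\assmp$-sane, and sanity is one of the equivalent conditions in \Cref{thrm:finiteConjunctiveFragment}, there are only finitely many $\assmp$-valid simple formulas. Each of these has a finite number of quantifiers, so, taking the maximum over this finite set, I obtain a bound $Q$ such that no $\assmp$-valid simple formula has $Q$ or more quantifiers; this is consistent with \Cref{cor:validityMaxQuantifier}, which ensures that once a quantifier level has no valid simple formula, neither does any higher level.

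Before the core argument I would record two easy facts about the run of \Cref{fig:axiom-synt-alg}. First, the list checked\_true contains only $\assmp$-valid formulas: a formula is placed there only after passing the explicit test $\assmp \models \varphi$, so this is exactly the soundness recorded in \Cref{lem:soundness}. Second, each level $q$ of the outer loop enumerates only finitely many simple formulas, because the quantifier guards range over the finite set $\thresholds$, the atomic guard over the finite set $\thresholdsEx$, and the atomic set terms are distinct elements of the finite set $\hat{\ParamSet}$; hence every inner loop terminates and the algorithm actually reaches level $Q$ in finite time.

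The heart of the argument is to show that at level $q = Q$ nothing is added to checked\_true, so that the final guard of \Cref{fig:axiom-synt-alg} fires. Fix any simple formula $\varphi$ with exactly $Q$ quantifiers; by the choice of $Q$ it is not $\assmp$-valid. The explicit-validity branch therefore fails, so $\varphi$ is never added to checked\_true through it. The only other way $\varphi$ could be treated as ``valid'' is the subsumption branch, which merely yields $\varphi$ without adding it to checked\_true; moreover this branch cannot even fire, since if some $\psi \in$ checked\_true satisfied $\psi \bapaSubvalid \varphi$ then, $\psi$ being $\assmp$-valid by the invariant, \Cref{lem:subsumption} would force $\varphi$ to be $\assmp$-valid, a contradiction. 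Thus no formula reaches the step that extends checked\_true, the termination test succeeds, and \ouralg halts.

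The point requiring the most care — and the main potential pitfall — is precisely that a formula yielded by subsumption is not added to checked\_true, so the termination test keys only on the explicit validity discoveries. The argument consequently hinges on two invariants being available simultaneously at level $Q$: that the explicit test never succeeds there (from the quantifier bound $Q$) and that every checked\_true entry is genuinely $\assmp$-valid (so that \Cref{lem:subsumption} applies to block the subsumption branch). With the per-level finiteness and the bound $Q$ in hand, termination is then immediate.
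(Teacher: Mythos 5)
Your proof is correct and follows essentially the same route as the paper, which treats termination as an immediate consequence of \Cref{thrm:finiteConjunctiveFragment} (sanity implies finitely many $\assmp$-valid simple formulas, hence a quantifier bound $Q$) combined with the observation, stated just before \Cref{fig:axiom-synt-alg}, that \ouralg halts once some quantifier level yields no $\assmp$-valid formula. Your additional care in checking that the subsumption branch cannot spuriously populate checked\_true at level $Q$ (via \Cref{lem:soundness} and \Cref{lem:subsumption}) is a correct and worthwhile elaboration of a step the paper leaves implicit.
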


\subsection{From \lang to Axioms in EPR}\label{sec:IPFAcyclicity}

The set of simple formulas generated by \ouralg, $\IPFassmp$, is translated to FOL axioms as described in \Cref{sec:translation-to-fol}.
Next, we show how to ensure that the quantifier alternation graph (\Cref{sec:qa-graph}) of $\ToFOL(\IPFassmp)$ is acyclic.
\iflong
\begin{observation} \label{observation}
A simple formula $\varphi$ with atomic guard $\guard{t}$  induces quantifier alternation edges in $\stgraph{\ToFOL(\varphi)}$ from the threshold sorts of the guards of its universal quantifiers (these are thresholds in $\thresholds$) to the threshold sort of $t$ if $t \in \thresholds$ or to the $\snode$ sort if $t = 1$. If $t = \nP$, no quantifier alternation edges are induced by $\varphi$.
\end{observation}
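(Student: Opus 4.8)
The plan is to unfold the definition of $\ToFOL$ on a simple formula, observe that the result contains no function symbols (so that, by the definition of $\stgraph{\cdot}$, only quantifier edges can arise), and then read the $\forall\exists$ edges directly off the quantifier prefix. The key point is that the translation of a simple formula introduces at most one existential quantifier — the one coming from its atomic guard — and the location and sort of that quantifier is exactly what distinguishes the three cases of the observation.

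Concretely, I would start from a simple formula $\varphi = \forall x_1 : \guard{t_1} \ldots \forall x_q : \guard{t_q}.\ \guard{t}(\Bterm)$ with $\Bterm = x_1 \cap \ldots \cap x_q \cap a_1 \cap \ldots \cap a_k$, where $t_1,\ldots,t_q \in \thresholds$ and $t \in \thresholdsEx$. Applying $\ToFOL(\forall x : g.\,\psi) = \forall x : \squorumof{g}.\,\ToFOL(\psi)$ a total of $q$ times rewrites the prefix into the universal block $\forall x_1 : \squorumof{t_1} \ldots \forall x_q : \squorumof{t_q}$, leaving $\ToFOL(\guard{t}(\Bterm))$ as the body. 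Here I split on $t$: for $t \in \thresholds$ the body is $\exists z : \squorumof{t}.\ \forall \nodevar : \snode.\ \rmember_t(\nodevar,z) \to \ToFOL(\nodevar \in \Bterm)$; recalling the uniform readings $\guard{1}(\Bterm) \equiv \Bterm \neq \emptyset$ and $\guard{\nP}(\Bterm) \equiv \Bterm^c = \emptyset$, for $t = 1$ the body is $\exists \nodevar : \snode.\ \ToFOL(\nodevar \in \Bterm)$, and for $t = \nP$ it is $\forall \nodevar : \snode.\ \ToFOL(\nodevar \in \Bterm)$. In every case $\ToFOL(\nodevar \in \Bterm)$ is a quantifier-free conjunction of literals over the relations $\rmember_{t_i}$ and $\rmember_{a_j}$, and passing to negation normal form merely rewrites the implication as a disjunction and pushes negations onto these atoms, leaving the quantifier prefix unchanged.

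It then remains to apply the edge definition of $\stgraph{\cdot}$. Since no function symbols occur, there are no function edges, and the only quantifier edges come from existentials nested under universals. For $t \in \thresholds$ the unique existential $\exists z : \squorumof{t}$ sits in the scope of exactly $\forall x_1 : \squorumof{t_1}, \ldots, \forall x_q : \squorumof{t_q}$, giving an edge from each $\squorumof{t_i}$ to $\squorumof{t}$; for $t = 1$ the existential is $\exists \nodevar : \snode$ under the same prefix, giving edges into $\snode$; for $t = \nP$ the formula is purely universal, so no edge arises. The only step requiring care — and the natural place for a slip — is the innermost $\forall \nodevar : \snode$ present in the first and third cases: I must confirm it contributes no edge, which holds because quantifier edges point only into existentially quantified sorts, and in the first case no existential lies within its scope (it is nested inside $\exists z$), while in the third case there is no existential at all. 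This case analysis reproduces exactly the three clauses of the observation.
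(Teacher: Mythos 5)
Your proposal is correct and follows exactly the route the paper intends: the paper states this as an observation with no explicit proof, treating it as immediate from unfolding $\ToFOL$ on a simple formula and reading the $\forall\exists$ edges off the resulting quantifier prefix, which is precisely what you do (including the correct handling of the three cases $t \in \thresholds$, $t = 1$, $t = \nP$ via the identifications $\guard{1}(\Bterm) \equiv \Bterm \neq \emptyset$ and $\guard{\nP}(\Bterm) \equiv \Bterm^c = \emptyset$, and the check that the innermost $\forall \nodevar : \snode$ and the function-free, quantifier-free body contribute no edges). No gaps.
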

\else
A simple formula induces quantifier alternation edges in $\stgraph{\ToFOL(\varphi)}$ from the sorts of its universal quantifiers to the sort of its atomic guard $\guard{t}$ (or if $t = 1$ to the $\snode$ sort).
\fi
Therefore, from \Cref{lem:subsumption}, for a $\assmp$-valid $\varphi$, cycles in $\stgraph{\ToFOL(\varphi)}$ may only occur if they occur in the graph obtained by $\bapaSubvalid$.
Furthermore, if $\stgraph{\ToFOL(\varphi)}$ is not acyclic, then the atomic guard must be equal to one of the quantifier guards. We refer to such a formula as a \emph{cyclic formula}.
We show that, under the following assumption, we can eliminate all cyclic formulas from $\IPFassmp$.

\begin{definition}
$\thresholds$ is \emph{$\assmp$-acyclic} if for every $t_1,t_2 \in \thresholds$, if $\assmp \models t_1 = t_2$ then $t_1 = t_2$.
\end{definition}

Intuitively, if $\thresholds$ is not $\assmp$-acyclic, then it has (at least) two ``equivalent'' thresholds, making one of them redundant. If that is the case, we can alter the protocol and its proof so that one of these guards is eliminated and the other one is used instead.

\iflong
To facilitate elimination of cyclic formulas, we also need to strengthen the sanity requirement (\Cref{def:sanity}) to refer to the set parameters as well:
\begin{definition}
 $(\thresholds,\ParamSet)$ is \emph{$\assmp$-sane} if $\thresholds$ is $\assmp$-sane and, in addition, for every $t \in \thresholds$ and $a \in \hat{\ParamSet}$,
 $\assmp \not\models t \leq 0 \lor \card{a} = \nP$.
\end{definition}
\fi

\begin{theorem}\label{thrm:cycles}
\iflong
Assume that $\thresholds$ is $\assmp$-feasible and $\assmp$-acyclic and that $(\thresholds,\ParamSet)$ is $\assmp$-sane.
Let $\IPFassmp$ be the set of simple $\assmp$-valid formulas returned by \ouralg, and let
$\IPFassmp' = \{ \varphi\in \IPFassmp \mid \varphi \mbox{ is not cyclic}\}$.
Then the verification conditions of the first-order transition system with axioms $\ToFOL(\IPFassmp)$ are valid  if and only if they are valid with axioms $\ToFOL(\IPFassmp')$.
Further, the quantifier alternation graph of $\ToFOL(\IPFassmp')$ is acyclic.
\else
Let $\thresholds$ be $\assmp$-feasible and $\assmp$-acyclic and $(\thresholds,\ParamSet)$ be $\assmp$-sane.
Let $\IPFassmp$ be the set returned by \ouralg, and
$\IPFassmp' = \{ \varphi\in \IPFassmp \mid \varphi \mbox{ is acyclic}\}$.
Then the VCs of the FO transition system with axioms $\ToFOL(\IPFassmp)$ are valid  iff they are valid with axioms $\ToFOL(\IPFassmp')$.
Further, $\stgraph{\ToFOL(\IPFassmp')}$ is acyclic.
\fi
\end{theorem}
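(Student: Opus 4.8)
The plan is to prove the two assertions separately and to reduce both to a single structural fact: under the stated hypotheses the \emph{only} $\assmp$-valid cyclic \lang formula is $\forall x:\guard{t}.\,\guard{t}(x)$ for $t \in \thresholds$, combined with the fact (\Cref{observation}) that each formula of $\IPFassmp'$ contributes quantifier-alternation edges only from the threshold sorts of its quantifier guards to the sort of its atomic guard. Since $\IPFassmp' \subseteq \IPFassmp$, models of $\foassmp \wedge \ToFOL(\IPFassmp)$ form a subset of models of $\foassmp \wedge \ToFOL(\IPFassmp')$, so enlarging the axiom set preserves validity of the verification conditions and one direction of the equivalence is immediate; the content is to show that the removed (cyclic) axioms are logically redundant, and that their removal breaks all cycles in the quantifier-alternation graph.

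First I would characterize the $\assmp$-valid cyclic formulas. Let $\varphi \in \IPFassmp$ be cyclic, so its atomic guard $\guard{t}$ has $t = t_j \in \thresholds$ equal to the guard of some quantified variable $x_j$. Suppose, toward a contradiction, that the intersection in $\varphi$ has a second conjunct $c$ besides $x_j$ --- either another quantified variable $x_{j'}$ guarded by $t_{j'} \in \thresholds$, or a set parameter $a \in \hat{\ParamSet}$. Dropping all conjuncts except $x_j$ and $c$ yields a simple formula $\psi$ with $\varphi \bapaSubvalid \psi$ (map the two conjuncts of $\psi$ reflexively to the corresponding conjuncts of $\varphi$), so by \Cref{lem:subsumption} $\assmp \models \varphi$ would give $\assmp \models \psi$. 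A direct inclusion-exclusion computation shows that $\forall x:\guard{t}.\,\forall y:\guard{t_{j'}}.\,\guard{t}(x\cap y)$ is $\assmp$-valid iff $\assmp \models t\le 0 \lor t_{j'} > \nP-1$, and that $\forall x:\guard{t}.\,\guard{t}(x\cap a)$ is $\assmp$-valid iff $\assmp \models t \le 0 \lor \card{a} = \nP$. Both are excluded precisely by $\assmp$-sanity of $\thresholds$ and, respectively, of $(\thresholds,\ParamSet)$. Hence $\psi$ is not $\assmp$-valid, contradicting $\assmp \models \varphi$. Therefore a cyclic $\varphi \in \IPFassmp$ has no second conjunct, i.e.\ $\varphi = \forall x:\guard{t}.\,\guard{t}(x)$, whose translation $\ToFOL(\varphi) = \forall x:\squorumof{t}.\,\exists y:\squorumof{t}.\,\forall \nodevar:\snode.\,(\rmember_t(\nodevar,y) \to \rmember_t(\nodevar,x))$ is a first-order tautology (witness $y:=x$). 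Thus $\ToFOL(\IPFassmp)$ and $\ToFOL(\IPFassmp')$ differ only by valid formulas and have exactly the same models, so the verification conditions are valid with one axiom set iff with the other.

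For the acyclicity claim I would first record the monotonicity fact $(\star)$: if an $\assmp$-valid simple $\varphi$ has a quantifier guard $t_i \in \thresholds$ and atomic guard $t \in \thresholds$, then $\assmp \models t_i \ge t$. This follows by subsumption, since $\varphi \bapaSubvalid (\forall x:\guard{t_i}.\,\guard{t}(x))$ reflexively, so \Cref{lem:subsumption} gives $\assmp \models \forall x:\guard{t_i}.\,\guard{t}(x)$, i.e.\ $t_i \bapaSubvalid t$, i.e.\ $\assmp \models t_i \ge t$. Now consider $\stgraph{\ToFOL(\IPFassmp')}$. By \Cref{observation} every edge starts at a threshold sort $\squorumof{t_i}$ (quantifier guards lie in $\thresholds$) and ends either at a threshold sort or at $\snode$, and $\snode$ has no outgoing edges; hence any directed cycle uses only threshold sorts $\squorumof{s_1} \to \cdots \to \squorumof{s_m} \to \squorumof{s_1}$. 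Applying $(\star)$ to each edge gives $\assmp \models s_1 \ge s_2 \ge \cdots \ge s_m \ge s_1$, hence $\assmp \models s_i = s_{i+1}$; by $\assmp$-acyclicity of $\thresholds$ this forces $s_1 = \cdots = s_m$ syntactically, so the cycle degenerates to a self-loop at $\squorumof{s_1}$. But a self-loop is induced only by a formula whose atomic guard equals one of its quantifier guards, i.e.\ a cyclic formula, and $\IPFassmp'$ contains none --- a contradiction. Hence $\stgraph{\ToFOL(\IPFassmp')}$ is acyclic.

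The main obstacle is the cyclic-formula characterization of the second paragraph: it is what makes the cyclic axioms genuinely redundant rather than merely self-looping. It hinges on combining the syntactic subsumption machinery (\Cref{lem:subsumption}) with the two flavors of the sanity hypothesis to eliminate every possible extra conjunct, and on getting the inclusion-exclusion cardinality bounds --- and thus the exact BAPA (non-)entailments --- right. The remaining steps (the tautology check and the cycle-collapsing argument via $\assmp$-acyclicity) are then routine.
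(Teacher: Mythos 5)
Your proposal is correct and follows essentially the same route as the paper: it isolates the same key lemma (under feasibility and sanity, the only $\assmp$-valid cyclic simple formulas are of the form $\forall x:\guard{t}.\,\guard{t}(x)$, whose FO translation is a tautology, proved by subsuming down to a two-conjunct formula and refuting it with a sanity-based countermodel), and then collapses any would-be cycle in $\stgraph{\ToFOL(\IPFassmp')}$ to a self-loop via the subsumption-induced inequalities $s_i \geq s_{i+1}$ and $\assmp$-acyclicity. Your explicit statement of the monotonicity fact $(\star)$ is only a minor repackaging of what the paper uses implicitly when it remarks that edges of the quantifier alternation graph must follow the $\bapaSubvalid$ relation.
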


\ifproofs 
To prove the theorem we first prove the following lemma:

\begin{lemma} \label{lem:cycles}
If $\thresholds$ is $\assmp$-feasible and $(\thresholds,\ParamSet)$ is $\assmp$-sane, then for every $\assmp$-valid simple formula $\varphi$, if
$\varphi$ is cyclic, then $\ToFOL(\varphi) \equiv \true$.
\end{lemma}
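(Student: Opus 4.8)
The plan is to show that, under the hypotheses, the only cyclic $\assmp$-valid simple formula is (equivalent to) $\forall x:\guard{t}.\,\guard{t}(x)$, whose FO translation is trivially valid. Recall from \Cref{observation} and the discussion following it that a simple formula is \emph{cyclic} exactly when its atomic guard $\guard{t}$ has $t \in \thresholds$ and $t$ equals the guard of one of its universally quantified variables. So let $\varphi$ be cyclic and $\assmp$-valid; since the universal quantifiers commute we may assume
\[
\varphi = \forall x_1:\guard{t}.\,\forall x_2:\guard{h_2}\cdots\forall x_q:\guard{h_q}.\ \guard{t}(x_1 \cap x_2 \cap \cdots \cap x_q \cap a_{q+1} \cap \cdots \cap a_k),
\]
where the guard of $x_1$ is the atomic guard $\guard{t}$, each $h_i \in \thresholds$, and $a_{q+1},\ldots,a_k \in \hat{\ParamSet}$ are distinct. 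I claim that feasibility and $\assmp$-sanity force $\varphi$ to have no conjunct besides $x_1$, i.e.\ $q=1$ and $k=1$, so that $\varphi = \forall x_1:\guard{t}.\,\guard{t}(x_1)$. Granting this,
\[
\ToFOL(\varphi) = \forall x_1:\squorumof{t}.\,\exists z:\squorumof{t}.\,\forall \nodevar:\snode.\ \rmember_t(\nodevar,z) \to \rmember_t(\nodevar,x_1),
\]
which is valid in every FO structure over $\Sigma$: instantiating the existential with $z := x_1$ makes the matrix a tautology. Hence $\ToFOL(\varphi) \equiv \true$.

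The heart of the proof is the reduction, which I carry out by contradiction: assuming $\varphi$ has at least one conjunct besides $x_1$, I build a BAPA model and valuation violating $\varphi$, contradicting $\assmp$-validity. Fix one such extra conjunct $c$ (either a variable $x_j$ with $j \ge 2$, or a parameter $a_i$). Using the relevant clause of $\assmp$-sanity I obtain a model $\structbapa \models \assmp$ in which $t$ is non-degenerate and $c$ is ``proper''. Writing $N = \Int(\nP)$ and $T = \Int(t)$: if $c = x_j$ is a variable with guard $h_j$, the sanity clause $\assmp \not\models t \le 0 \lor h_j > \nP-1$ applied to the pair $(t,h_j)$ yields a model with $T \ge 1$ and $\Int(h_j) \le N-1$; if $c = a_i$ is a parameter, the strengthened clause $\assmp \not\models t \le 0 \lor \card{a_i} = \nP$ yields a model with $T \ge 1$ and $\Int(a_i) \le N-1$. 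In either case the shrinking conjunct is (realizable as) a set missing at least one element of the universe $U$.

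I then choose the valuation. Take $x_1$ to be any set of size exactly $T$ (possible since $1 \le T \le N$ by feasibility), placed to overlap the shrinking conjunct as little as possible. In the variable case, set $x_j$ to a set of size $\Int(h_j)$ with $\card{x_1 \cap x_j} = \max(0,\,T + \Int(h_j) - N) < T$ (realizable in a universe of size $N$, and strictly below $T$ precisely because $\Int(h_j) \le N-1$); in the parameter case, place $x_1$ so that $\card{x_1 \cap a_i} = \max(0,\,T - \card{a_i^c}) < T$, which is $< T$ because $\card{a_i^c} \ge 1$. Every remaining universally quantified variable is assigned the full universe $U$, which satisfies its guard since $\Int(h) \le N = \card{U}$ by feasibility. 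As the whole product $x_1 \cap x_2 \cap \cdots \cap x_q \cap a_{q+1} \cap \cdots \cap a_k$ is contained in the already-small set $x_1 \cap c$, its cardinality is $< T = \Int(t)$, so the atomic guard $\guard{t}$ fails on this valuation and $\structbapa \not\models \varphi$ --- contradicting $\assmp$-validity. This establishes the reduction and, with the trivial-translation step above, completes the proof.

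The step I expect to be the main obstacle is avoiding the need for a single BAPA model that simultaneously controls \emph{all} thresholds and parameters occurring in $\varphi$: sanity is stated only per pair, so no such uniform model is available in general. The observation that sidesteps this is that one shrinking conjunct already drives the intersection below $T$ --- all other conjuncts can be set to $U$ (for variables) or simply tolerated (for parameters), since further intersection only decreases cardinality. Thus a model controlling just $t$ and the single chosen conjunct $c$, which the relevant per-pair sanity clause provides, suffices.
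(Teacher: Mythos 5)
Your proof is correct and follows essentially the same route as the paper's: both arguments reduce the claim to showing that the only cyclic $\assmp$-valid simple formula is $\forall x:\guard{t}.\,\guard{t}(x)$ (whose translation is trivially valid), and both refute any extra conjunct via a minimal-overlap counter-valuation obtained from the sanity clause for the pair $(t,c)$ --- your ``assign $U$ to all remaining variables'' step is precisely the semantic content of the paper's appeal to \Cref{lem:subsumption}. The only nitpick is that thresholds may be fractional, so ``a set of size exactly $T$'' should be $\lceil T\rceil$ (as in the paper's use of $\lceil\Int(t)\rceil$), which does not affect the argument.
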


\begin{proof}[sketch]
Let
$\varphi= \forall x_1:\guard{t_1} \ldots  \forall x_q:\guard{t_q}.\ \guard{t}(x_1 \cap \ldots \cap x_q \cap a_{1} \ldots \cap a_{k})$.
Following \Cref{observation}, because $\varphi$ is cyclic, there must exist $t_i$ such that $t_i = t$.
Assume without loss of generally that this is $t_1$ (the threshold associated with the first quantifier). If $q > 1$, then we get that $\varphi \bapaSubvalid \forall x_1 : \guard{t}.\ \forall x_2 : \guard{t_2}. \guard{t}(x_1 \cap x_2)$.
Because $\thresholds$ is $\assmp$-sane, there is $\structbapa=(\Dom,\Int)$ such that $\structbapa \models \assmp$ but $\structbapa \models t > 0$ and $\structbapa \models t_2 \leq \nP - 1$.
Because $\thresholds$ is $\assmp$-feasible, a valuation $\Val$ over $\structbapa$ exists such that
$\card{\Val(x_1)} = \lceil\Int(t)\rceil >0$ (this is well-defined since $\assmp$-feasibility ensures that $\lceil\Int(t)\rceil \leq \Int(\nP)$), $\card{\Val(x_2)} = \max\{\lceil\Int(t_2)\rceil,0\} < \nP$, hence the guards of the quantifiers are satisfied,
but $\card{\Val(x_1) \cap \Val(x_2)} < \Int(t)$, from which we conclude that $\varphi$ is not $\assmp$-valid. This means that $q=1$.
Similarly, assume $k > 0$, then we have that $\varphi \bapaSubvalid \forall x : \guard{t}.\ \guard{t}(x \cap a_1)$, and because $(\thresholds,\ParamSet)$ is $\assmp$-sane (which ensures that there is $\structbapa$ such that $\structbapa \models \assmp$ but $\structbapa \models t > 0$ and $\structbapa \models \card{a_1} < \nP$), we again conclude that $\varphi$ is not $\assmp$-valid. This means that $k=0$. The conclusion is that $\varphi = \forall x : \guard{t}.\ \guard{t}(x)$, for which $\ToFOL(\varphi) \equiv \true$.
\end{proof}

\paragraph{Proof of \Cref{thrm:cycles}.}
Let $\IPFassmp$ be the set of all $\assmp$-valid \lang formulas, and let $\IPFassmp'$ be defined as in the theorem. \Cref{lem:cycles} shows that $\ToFOL(\IPFassmp') \models \ToFOL(\IPFassmp)$ as required.
Assume $\stgraph{\ToFOL(\IPFassmp')}$ is cyclic with a cycle $\alpha$.
Following \Cref{observation}, the sort $\snode$ cannot be in $\alpha$.
Let $t_1, t_2 \in \thresholds$ be such that $\squorumof{t_1}, \squorumof{t_2} \in \alpha$. Because $\bapaSubvalid$ is transitive, we get that $t_1 \bapaSubvalid t_2$ and $t_2 \bapaSubvalid t_1$, which by definition of $\bapaSubvalid$ means that $\assmp \models t_1 = t_2$, and because $\thresholds$ is $\assmp$-acyclic, we get that $t_1 = t_2$, i.e., $\alpha$ corresponds to a self loop over a single sort $\squorumof{t_1}$. By \Cref{observation}, a self loop may only arise from a cyclic formula. Hence, each cycle in $\stgraph{\ToFOL(\IPFassmp')}$ is induced by a single cyclic formula. As $\IPFassmp'$ contains no cyclic formulas, we conclude that $\stgraph{\ToFOL(\IPFassmp')}$ contains no cycles.
\qed
\medskip

\fi

\subsection{Finding Minimal Properties Required for a Protocol}
\label{sec:impl}

If $\IPFassmp$ consists of \emph{all} acyclic $\assmp$-valid \lang formulas returned by \ouralg,   using $\ToFOL(\IPFassmp)$ as FO axioms leads to divergence of the verifier. To overcome this, we propose two variants.

\paragraph{Minimal Equivalent $\IPFassmp_{min}$.}
Some of the formulas in $\ToFOL(\IPFassmp)$ are implied by others, making them redundant.
We remove such formulas using a greedy procedure that for every
$\varphi_i \in \IPFassmp$,   checks whether $\ToFOL(\IPFassmp \setminus \{\varphi_i\}) \models \ToFOL(\varphi_i)$, and if so, removes $\varphi_i$ from $\IPFassmp$.
Note that if $\stgraph{\ToFOL(\IPFassmp)}$ is acyclic, the check translates to (un)satisfiability in EPR. 

This procedure results in $\IPFassmp_{min} \subseteq \IPFassmp$ s.t. $\ToFOL(\IPFassmp_{min}) \models \ToFOL(\IPFassmp)$ and no strict subset of $\IPFassmp_{min}$ satisfies this condition.
That is, $\IPFassmp_{min}$ is a local minimum for that property.

\paragraph{Interpolant $\IPFassmp_{int}$.}

There may exist $\IPFassmp_{int} \subseteq \IPFassmp$  s.t. 
$\ToFOL(\IPFassmp_{int}) \not \models \ToFOL(\IPFassmp)$
but $\ToFOL(\IPFassmp_{int})$ suffices to prove the first-order VCs, and enables to discharge the VCs more efficiently.
We compute such a set $\IPFassmp_{int}$ iteratively. Initially, $\IPFassmp_{int}= \emptyset$. In each iteration,  we check the VCs.
If a counterexample to induction (CTI) is found, we add to $\IPFassmp_{int}$ a formula from $\IPFassmp$ not satisfied by the CTI.
In this approach, $\IPFassmp$ is not pre-computed. Instead, \ouralg is invoked lazily to generate candidate formulas in reaction to CTIs.

  \section{Evaluation}

\iflong

We evaluate the approach by verifying several challenging threshold-based distributed protocols that use sophisticated thresholds:
Bosco~\cite{SongR08} (presented in~\Cref{sec:exampleBosco}), Hybrid
Reliable Broadcast~\cite{ST87:abc}, and Byzantine Fast
Paxos~\cite{lamport2009fast}.

\subsection{Protocols}
\label{sec:protocols}

\subsubsection{Bosco}

Bosco was explained in detail in \Cref{sec:exampleBosco}. We verified
it under 3 different resilience conditions.
The condition $\nP > 3\tP$ is required in order to ensure correctness.
The condition $\nP > 5\tP$ allows Bosco to guarantee that if there are
no Byzantine processes and all processes have the same input value,
then every processor would reach a decision in a single network step
(\emph{weakly one-step}). If $\nP > 7\tP$ then Bosco ensures that even
if there are some faulty processes, when all non-faulty processes
start with the same initial value, they would reach a decision within
a single network step (\emph{strongly one-step}).
To evaluate our approach, we verify the safety and liveness
(using the liveness to safety reduction presented in~\cite{PadonHLPSS18}) of Bosco.

\subsubsection{Hybrid Reliable Broadcast} We consider the asynchronous reliable
broadcast from~\cite{ST87:abc}, that is designed to
tolerate~$\tP_b<\frac{\nP}{3}$ Byzantine faulty processes.  Hybrid reliable
broadcast, in its extended version, tolerates four different types of faults
(namely Byzantine faults, symmetric faults, clean crash, and crash faults),
with associated constants denoted in \Cref{fig:both} by $f_b$, $f_s$,
$f_c$, and $f_i$, respectively.  The protocol constitutes the core of the clock
synchronization algorithm presented in~\cite{WS07:DC}.  Interestingly,
the protocol is correct under several different threshold conditions~\cite{LKWB17:opodis}.
Nevertheless, the threshold intersection properties are same in all cases,
which confirms that we capture the essence of thresholds,
independently of their arithmetic representation. We verify the safety and liveness of Hybrid
Reliable Broadcast.

\subsubsection{Byzantine Fast Paxos}

Byzantine Fast Paxos~\cite{lamport2009fast} is a fast-learning~\cite{lamportLowerBoundsAsynchronous2003,lamport_lower_2006}
Byzantine fault-tolerant consensus protocols for an
asynchronous system equipped with a leader-election module; 
``fast-learning'' means that under favourable timing, a decision can be reached in a single
round of communication (when each node is simultaneously a proposer, acceptor,
and learner) despite $q$ crash failures.  Moreover, Byzantine Fast Paxos is
optimal in the sense that, in general, no fast-learning protocol can improve
the bounds on $n$~\cite{lamportLowerBoundsAsynchronous2003}. 
Byzantine Fast
Paxos is safe if at most $t$ Byzantine failures occur in a system of $n$ nodes
where $n>3t+2q$ for some parameter $q\leq t$ (Lamport~\cite{lamport2009fast}
assumes that at most $m$ out of the $t$ failures are Byzantine; we only
consider the case $m=t$). We verified the safety of Byzantine Fast
Paxos, i.e., agreement.

Note that Byzantine fault-tolerant fast-learning consensus protocols are notoriously
tricky to develop and verify.  For example, two such algorithms,
Zyzzyva~\cite{KotlaZyzzyvaSpeculativeByzantine2007} and
FaB~\cite{MartinFastByzantineConsensus2006}, were recently revealed
incorrect~\cite{abrahamRevisitingFastPractical2017} despite having been
published at major systems conferences. The verification of Fast Byzantine
Paxos therefore shows that the methodology presented in this paper is able to
handle some of the most intricate distributed protocols.

\else

We evaluate the approach by verifying several
challenging threshold-based distributed protocols that use sophisticated
thresholds: we verify the safety of Bosco~\cite{SongR08} (presented in~\Cref{sec:exampleBosco}) under its 3 different resilience conditions, the safety and liveness (using the liveness to safety reduction presented in~\cite{PadonHLPSS18}) of Hybrid
Reliable Broadcast~\cite{ST87:abc}, and the safety of Byzantine Fast
Paxos~\cite{lamport2009fast}.
Hybrid Reliable Broadcast tolerates four
different types of faults, while Fast Byzantine Paxos is a
fast-learning~\cite{lamportLowerBoundsAsynchronous2003,lamport_lower_2006}
Byzantine fault-tolerant consensus protocol; fast-learning protocols are
notorious because two such algorithms,
Zyzzyva~\cite{KotlaZyzzyvaSpeculativeByzantine2007} and
FaB~\cite{MartinFastByzantineConsensus2006}, were recently revealed
incorrect~\cite{abrahamRevisitingFastPractical2017} despite having been
published at major systems conferences.

\fi

\iflong
\subsection{Implementation}
\else
\paragraph{Implementation}
\fi

We implemented both algorithms described in \Cref{sec:impl}.
\ouralgeager eagerly constructs $\IPFassmp$ by running
\ouralg, and then uses EPR reasoning to remove redundant formulas 
(whose FO representation is implied by 
the FO representation of others). To reduce the number of EPR validity checks used during this minimization step, we implemented an optimization that allows us to prove redundancy of \lang formulas internally based on an extension of the notion of subsumption from \Cref{sec:axiom-synt}. \ouralglazy computes a subset of $\IPFassmp$ while using \ouralg in a lazy fashion, guided by CTIs obtained from
attempting to verify the FO transition system.
Our implementations use CVC4 to discharge BAPA queries, and Z3 to
discharge EPR queries. Verification of first-order transition systems
is performed using Ivy, which internally uses Z3 as well. All
experiments reported were performed on a laptop running 64-bit Windows
10, with a Core-i5 2.2 GHz CPU, using Z3 version 4.8.4, CVC4 version
1.7, and the latest version of Ivy. 

\iflong
\subsection{Results}
\fi

\Cref{fig:both} lists the protocols we verified and the details of the
evaluation. Each experiment was repeated 10 times,
and we will report the mean time ($\mu$) and standard deviation ($\sigma$).
The figure's caption explains the presented information,
and we discuss the results below.

\subsubsection{\ouralgeager}

For all protocols, running \ouralg took less than 1 minute (column $\mathbf{t_C}$), and
generated all $\assmp$-valid simple \lang formulas. We observe that
for most formulas, (in)validity is deduced from other formulas by subsumption,
and less than 2\%--5\% of the formulas are actually checked using a
BAPA query.
With the optimization of the redundancy check, minimization of the set is performed in negligible time.
The resulting set, $\IPFassmp_\textsc{Eager}$, contains 3--5 formulas, compared
to 39--79 before minimization.

Due to the optimization described in \Cref{sec:language} for the BAPA validity queries, the number of quantifiers in the \lang formulas that are checked by \ouralg does not affect the
time needed to compute the full $\IPFassmp$.
For example, Bosco under the Strongly One-step resilience condition contains $\assmp$-valid simple
\lang formulas with up to 7 quantifiers (as $\nP > 7\tP$ and $t_1 =
\nP - \tP$), but $\ouralg$ does not take significantly longer to find $\IPFassmp$. Interestingly, in this example the $\assmp$-valid \lang formulas with
more than 3 quantifiers are implied (in FOL) by formulas with at
most 3 quantifiers, as indicated by the fact that these are the only
formulas that remain in $\IPFassmp_\textsc{Eager}^{\text{Bosco Strongly One-step}}$.

\subsubsection{\ouralglazy}

With the lazy approach based on CTIs, the time for finding the set of \lang formulas, $\IPFassmp_\textsc{Lazy}$, is generally longer. This is
because the run time is dominated by calls to Ivy with
FO axioms that are too weak for verifying the protocol. However, the
resulting $\IPFassmp_\textsc{Lazy}$ has a significant benefit: it lets
Ivy prove the protocol much faster compared to using
$\IPFassmp_\textsc{Eager}$. Comparing $\mathbf{t_V}$ in \ouralgeager
vs. \ouralglazy shows that when the former takes a minute, the latter
takes a few seconds, and when the former times out after 1 hour, the
latter terminates, usually in under 1 minute. Comparing the formulas of
$\IPFassmp_\textsc{Eager}$ and $\IPFassmp_\textsc{Lazy}$ reveals the
reason. While the FO translation of both yields EPR formulas, the formulas
resulting from $\IPFassmp_\textsc{Eager}$ contain more quantifiers and
generate much more ground terms, which degrades the performance of
Z3.

Another advantage of the lazy approach is that during the search, it
avoids considering formulas with many quantifiers unless those are
actually needed. Comparing the 3 versions of Bosco we see that \ouralglazy
is not sensitive to the largest number of quantifiers that may appear
in a $\assmp$-valid simple \lang formula. The downside is that \ouralglazy performs
many Ivy checks in order to compute the final $\IPFassmp_\textsc{Lazy}$. The total duration of finding CTIs varies significantly (as demonstrated under the column $\mathbf{t_I}$), in part because it is very sensitive to the CTIs returned by Ivy, which are in turn affected by the random seed used in the heuristics of the underlying solver.

Finally, $\IPFassmp_\textsc{Lazy}$ provides more insight into the
protocol design, since it presents minimal assumptions that are
required for protocol correctness. Thus, it may be useful in designing
and understanding protocols.

\section{Related Work}

\para{Fully automatic verification of threshold-based protocols}
Algorithms modeled as Threshold automata (TA)~\cite{KVW17:IandC} have been
  studied in~\cite{KLVW17:POPL,KLVW17:FMSD}, and verified
  using an automated tool ByMC~\cite{KW18}.
The tool also automatically synthesizes thresholds as arithmetic
  expressions~\cite{LKWB17:opodis}.
Reachability properties of TAs for more general
  thresholds are studied in~\cite{KKW18}.
There have been recent advances in verification of synchronous
  threshold-based algorithms using TAs~\cite{stoilkovska:hal-01925653},
  and of asynchronous randomized algorithms where
  TAs support coin tosses and unboundedly many rounds~\cite{bertrand:hal-01925533}.
Still, this modeling is very restrictive and not as faithful to the pseudo-code
  as our modeling. 

Another approach for full automation is to use sound and incomplete
procedures for deduction and invariant search for logics that combine
quantifiers and set
cardinalities~\cite{gleissenthall_cardinalities_2016,sally}.  However,
distributed systems of the level of complexity we consider here (e.g.,
Byzantine Fast Paxos) are beyond the reach of these techniques.

\para{Verification of distributed protocols using decidable logics} Padon et al.~\cite{PadonMPSS16} introduced an
interactive approach for the safety verification of distributed protocols based
on EPR using the Ivy~\cite{McMillanP18} verification tool.  Later works
extended the approach to more complex protocols~\cite{PadonLSS17}, their
implementations~\cite{TaubeLMPSSWW18}, and liveness
properties~\cite{PadonHLPSS18,PadonHMPSS18}.  Those works verified some
threshold protocols using ad-hoc first-order modeling and axiomatization of
threshold-intersection properties, whereas we develop a systematic methodology. Moreover,
the axioms were not mechanically verified,
except in~\cite{TaubeLMPSSWW18},
where a simple intersection property---intersection of two sets with more than $\frac{\nP}{2}$ nodes---requires a proof by induction over $\nP$. The proof relies on a user provided induction hypothesis that is automatically checked using the FAU decidable fragment~\cite{ge_complete_2009}.
This approach requires user ingenuity even for a simple intersection property,
and we expect that it would not scale to the more complex properties required for e.g.\ Bosco or Fast
Byzantine Paxos.
In contrast, our approach completely automates both verification and inference
of threshold-intersection properties required to verify protocol correctness.

Dragoi et al.~\cite{DragoiHVWZ14} propose a decidable logic supporting
cardinalities, uninterpreted functions, and universal quantifiers for
verifying consensus algorithms expressed in the partially synchronous
Heard-Of Model. As in this paper, the user is expected to provide an
inductive invariant. The PSync framework~\cite{dragoi_psync:_2016}
extends the approach to protocol implementations.  Compared to our
approach, the approach of Dragoi et al. is less flexible due to the
specialized logic used and the restrictions of the Heard-Of Model.

Our approach decomposes verification into EPR and BAPA.
Piskac~\cite{Piskac:168994} presents a decidable logic that combines BAPA and EPR, with some restrictions.
The verification conditions of the protocols we consider are outside the scope of this fragment since
they include cardinality constraints in the scope of quantifiers.
Furthermore, this logic is not supported by mature solvers.
Instead of looking for a specialized logic per protocol, we rely on a decomposition which allows more flexibility.

Recently,~\cite{DBLP:journals/pacmpl/GleissenthallKB19} presented an
approach for verifying asynchronous algorithms by reduction to
synchronous verification. This technique is largely orthogonal and
complementary to our approach, which is focused on the challenge of
cardinality thresholds.

\para{Verification using interactive theorem provers}
We are not aware of works based on interactive theorem provers that verified protocols with complex thresholds as we do in this work (although doing so is of course possible). However, many works used interactive theorem provers to verify related protocols, e.g., \cite{WilcoxWPTWEA15,SergeyWT18,LiuSL17,RahliGBC15,HawblitzelHKLPR15,RahliVVV18} (the most related protocols use either $\frac{{n}}{{2}}$ or $\frac{{2n}}{{3}}$ as the only thresholds, other protocols do not involve any thresholds). The downside of verification using interactive theorem provers is that it requires tremendous human efforts and skills. For example, the Verdi proof of Raft included ~50,000 lines of proof in Coq for ~500 lines of code \cite{WoosWATEA16}.

 \section{Conclusion}

This paper proposes a new deductive verification approach for
threshold-based distributed protocols by decomposing the verification
problem into two well-established decidable logics, BAPA and EPR, thus allowing greater flexibility
compared to monolithic approaches based on domain-specific,
specialized logics.  The user models
their protocol in EPR, defines the thresholds and resilience
conditions using arithmetic in BAPA, and provides an inductive
invariant. An automatic procedure infers  threshold intersection
properties expressed in \lang that are both (1)~sound w.r.t. the resilience conditions \iflong \else (checked in
quantifier-free BAPA) \fi and (2)~sufficient to discharge the VCs\iflong\else \ (checked in EPR)\fi.
\iflong
Soundness is automatically checked in
(quantifier-free) BAPA, and the VCs are automatically discharged using EPR.
\fi
Both logics are \iflong decidable, \fi supported by mature
solvers, and allow providing the user with an understandable
counterexample in case verification fails.

\iflong
We evaluate the approach by formally verifying the correctness (both
safety and liveness) of intricate protocols, including notoriously
tricky fast-learning consensus protocols such as Byzantine Fast Paxos.
The experimental results show
\else
Our evaluation, which includes notoriously
tricky fast-learning consensus protocols,
shows
\fi
that threshold intersection properties
are inferred in a matter of minutes.  While this may be too slow for
interactive use, we expect improvements such as memoization and
parallelism to provide response times of a few seconds in an
iterative, interactive setting. Another potential future direction is
combining our inference algorithm with automated invariant inference
algorithms.

 \remspace
\paragraph{Acknowledgements.}
We thank
the anonymous referees
for insightful comments which improved this paper.
This publication is part of a project that has received funding from the European Research Council (ERC) under the European Union's Horizon 2020 research and innovation programme (grant agreement No [759102-SVIS] and [787367-PaVeS]).
The research was partially supported by
Len Blavatnik and the Blavatnik Family foundation,
the Blavatnik Interdisciplinary Cyber Research Center, Tel Aviv University,
the Israel Science Foundation (ISF) under grant No. 1810/18,
the United States-Israel Binational Science Foundation (BSF) grant No. 2016260
and the Austrian Science~Fund~(FWF) through Doctoral College LogiCS (W1255-N23).
 \newcommand{\odedfrac}[2]{{{\frac{{#1}}{{#2}}}}}

\newcommand{\statresult}[2]{{{\begin{scriptsize}\begin{array}{l}
\mu(#1)\\
  \sigma(#2)
\end{array}\end{scriptsize}}}}

\newcommand{\odedcti}[2]{{
\begin{array}{c}
{#1}\\
{#2}
\end{array}
}}

\begin{landscape}
\begin{figure}[p]
  \begin{footnotesize}
  \arraycolsep=1.5pt
  \[
  \begin{array}{||l|c|c|c|c|c|c|c|l|l|c|c|c|c|c|c||}
    \hhline{|t:================:t|}
    \multirow{2}{*}{\textbf{Protocol}} &
    \multirow{2}{*}{\bf $\thresholds$} &
    \multirow{2}{*}{\bf $\assmp$} &
    \multicolumn{6}{|c|}{\bf \ouralgeager} &
    \multicolumn{7}{|c||}{\bf \ouralglazy}
    \\
&&&
    \textbf{V} &
    \textbf{I} &
    \textbf{Q} &
    \mathbf{t_C} &
    \mathbf{t_V} &
    \multicolumn{1}{|c|}{\textbf{$\IPFassmp_\textsc{Eager}^{\text{Protocol}}$}} &
\multicolumn{1}{|c|}{\textbf{$\IPFassmp_\textsc{Lazy}^{\text{Protocol}}$}} &
    \textbf{V} &
    \textbf{I} &
    \textbf{CTI} &
    \textbf{Q} &
    \mathbf{t_I} &
    \mathbf{t_V}
    \\
    \hhline{||----------------||}
    \mbox{Bosco}
    &
    \begin{array}{l}
      t_1 {=} \nP-t\\
      t_2 {=} \frac{\nP+3t+1}{2}\\
      t_3 {=} \frac{\nP-t+1}{2}
    \end{array}
    &
    \begin{array}{l}
      \nP > 3t\\
      \card{f} \leq t
    \end{array}
    &
    \odedfrac{23}{39} & \odedfrac{21}{1216} & 6 & 3s & \statresult{12s}{4s} &
    \begin{scriptsize}\begin{array}{l}
      g_1(f^c)\\
      \forall x{:}g_1{,}y{:}g_1{,}z{:}g_2{.}g_3(x {\cap} y {\cap} z)\\
      \forall x{:}g_1{,}y{:}g_2{,}z{:}g_3{.}x {\cap} y {\cap} z {\neq} \emptyset
    \end{array}\end{scriptsize}
    &
    \begin{scriptsize}\begin{array}{l}
      g_1(f^c)\\
      \forall x{:}g_1{,}y{:}g_2{.}g_3(x {\cap} y {\cap} f^c)\\
      \forall x{:}g_2{,}y{:}g_3{.}x {\cap} y {\cap} f^c {\neq} \emptyset\\
    \end{array}\end{scriptsize}
    &
    24 & 6 & 18 & 2 & \statresult{3m}{1m} & \statresult{4s}{0.4s}
    \\
    \hhline{||----------------||}
    \begin{array}{l}
      \mbox{Bosco}\\
      \mbox{Weakly}\\
      \mbox{One-step}
    \end{array}
    &
    ''
    &
    \begin{array}{l}
      \nP > 5t\\
      \card{f} \leq t
    \end{array}
    &
    \odedfrac{16}{51} & \odedfrac{24}{1204} & 6 & 3s & \statresult{13m}{14m}
    &
    \begin{scriptsize}\begin{array}{l}
      \IPFassmp_\textsc{Eager}^{\text{Bosco}}\\
      \forall x{:}g_1{.}g_2(x)\\
    \end{array}\end{scriptsize}
    &
    \begin{scriptsize}\begin{array}{l}
      \IPFassmp_\textsc{Lazy}^{\text{Bosco}} \\
      \forall x{:}g_1{.}g_2(x)\\
    \end{array}\end{scriptsize}
    &
    32 & 7 & 19 & 2 & \statresult{13m}{9m} & \statresult{9s}{2s}
    \\
    \hhline{||----------------||}
    \begin{array}{l}
      \mbox{Bosco}\\
      \mbox{Strongly}\\
      \mbox{One-step}
    \end{array}
    &
    ''
    &
    \begin{array}{l}
      \nP > 7t\\
      \card{f} \leq t
    \end{array}
    &
    \odedfrac{26}{63} & \odedfrac{24}{2407} & 8 & 8s & \text{T.O.}
    &
    \begin{scriptsize}\begin{array}{l}
      \IPFassmp_\textsc{Eager}^{\text{Bosco}} \\
      \forall x{:}g_1{,}y{:}g_1{.}g_2(x {\cap} y)\\
    \end{array}\end{scriptsize}
    &
    \begin{scriptsize}\begin{array}{l}
      \IPFassmp_\textsc{Lazy}^{\text{Bosco}} \\
      \forall x{:}g_1{.}g_2(x {\cap} f^c)\\
    \end{array}\end{scriptsize}
    &
    34 & 9 & 20 & 2 & \statresult{23m}{8m} & \statresult{16s}{13s}
    \\
    \hhline{||----------------||}
    \begin{array}{l}
      \mbox{Hybrid}\\
      \mbox{Reliable}\\
      \mbox{Broadcast}\\
    \end{array}
    &
    \begin{scriptsize}\begin{array}{l}
      t_1 {=} t_a {+} t_s {+} 1\\
      t_2 {=} \nP {-} t_c {-} t_a \\ \hfill {-} t_s {-} t_i
    \end{array}\end{scriptsize}
    &
    \begin{scriptsize}\begin{array}{l}
      \nP > t_c {+} 3t_a {+} \\ \hfill 2t_s {+} 2t_i\\
      |f_x| \leq t_x \\
      f_x {\cap} f_y {=} \emptyset \\ \mbox{for }  x\ne y \\ x{,}y{\in}{\{}a{,}c{,}i{,}s{\}}
    \end{array}\end{scriptsize}
    &
    \odedfrac{25}{63} & \odedfrac{34}{1877} & 2 & 37s & \statresult{35s}{0.3s}
    &
    \begin{array}{l}
      g_2(f^c_c {\cap} f^c_b {\cap} f^c_s {\cap} f^c_i)\\
      \forall x{:}g_1{.}x {\cap} f^c_b {\cap} f^c_s {\neq} \emptyset\\
      \forall x{:}g_2{.}g_1(x {\cap} f^c_b {\cap} f^c_i)\\
    \end{array}
    &
    \begin{array}{l}
\IPFassmp_\textsc{Eager}^{\text{Hybrid Reliable Broadcast}}\\    
    \end{array}
    &
    63 & 15 & 45 & 1 & \statresult{15m}{1.5m} & \statresult{43s}{1s}
    \\
    \hhline{||----------------||}
    \begin{array}{l}
      \mbox{Byzantine}\\
      \mbox{Fast}\\
      \mbox{Paxos}\\
    \end{array}
    &
    \begin{array}{l}
      t_1 {=} \nP{-}t\\
      t_2 {=} \nP{-}q\\
      t_3 {=} \nP{-}2t{-}q\\
      t_4 {=} t{+}1
    \end{array}
    &
    \begin{array}{l}
      \nP > 2q+3t\\
      t \geq q\\
      q \geq 0\\
      |b| \leq t
    \end{array}
    &
    \odedfrac{22}{79} & \odedfrac{44}{3695} & 6 & 6s & \text{T.O.}
    &
    \begin{scriptsize}\begin{array}{l}
      g_1(b^c)\\
      \forall x{:}g_2{.}g_1(x)\\
      \forall x{:}g_1{,}y{:}g_4{.}x {\cap} y {\neq} \emptyset\\
      \forall x{:}g_2{,}y{:}g_3{.}g_4(x {\cap} y)\\
      \forall x{:}g_1{,}y{:}g_1{,}z{:}g_2{.}g_3(x {\cap} y {\cap} z)\\
    \end{array}\end{scriptsize}
    &
    \begin{scriptsize}\begin{array}{l}
      g_1(b^c)\\
      \forall x{:}g_2{.}g_1(x)\\
      \forall x{:}g_3{.}x {\neq} \emptyset\\
      \forall x{:}g_4{.}x {\neq} \emptyset\\
      \forall x{:}g_1{.}g_3(x {\cap} b^c)\\
      \forall x{:}g_1{,}y{:}g_1{.}g_4(x {\cap} y)\\
    \end{array}\end{scriptsize}
    &
    44 & 11 & 19 & 2 & \statresult{36m}{21m} & \statresult{28m}{19m}
    \\
    \hhline{|b:================:b|}
  \end{array}
  \]
  \end{footnotesize}
  \caption{\label{fig:both}Protocols verified using our technique.
  For each protocol, $\thresholds$ is the set of thresholds and $\assmp$ is the resilience condition.
  $\ouralgeager$ lists metrics for the procedure of finding all $\assmp$-valid \lang formulas (taking time $\mathbf{t_C}$), and verifying the transition system using the resulting
  properties (taking time $\mathbf{t_v}$). Obtaining a minimal subset that FO-implies the rest takes negligible time, so we did not include it in the table.
  The properties are given in \textbf{$\IPFassmp_\textsc{Eager}^{\text{Protocol}}$}, where $g_i$ denotes $\guard{t_i}$.
  In addition to the run times, \textbf{V} shows $\odedfrac{c}{v}$, where $c$ is the number of $\assmp$-valid simple formulas that were checked using the BAPA solver (CVC4), and $v$ is the total number of  $\assmp$-valid simple formulas.
  Namely $v-c$ simple formulas were inferred to be valid via subsumption.
  \textbf{I} reports the analogous metric for $\assmp$-invalid simple formulas.
  Finally, \textbf{Q} reports the maximal number of quantifiers considered (for which all formulas were $\assmp$-invalid).
$\ouralglazy$ lists metrics for the procedure of finding a set of $\assmp$-valid \lang formulas sufficient to prove the protocol
  based on counterexamples. The resulting set is listed in \textbf{$\IPFassmp_\textsc{Lazy}^{\text{Protocol}}$},
  and $\mathbf{t_I}$ lists the total Ivy runtime, with the standard deviation specified below.
  \textbf{V} (resp. \textbf{I}) lists the number of $\assmp$-valid (resp. $\assmp$-invalid) simple formulas
  considered before the final set was reached.
  \textbf{CTI} lists the number of counterexample iterations required, and \textbf{Q} lists the
  maximal number of quantifiers of any \lang formula considered.
  Finally, $\mathbf{t_v}$ lists the time required to verify the first-order transition system assuming
  the obtained set of properties.
  T.O. indicates that a time out of 1 hour was reached.
}
\end{figure}
\end{landscape}

\bibliographystyle{splncs04}

\end{document}